\documentclass[12pt]{article}

\usepackage{amssymb,a4wide}
\usepackage{upgreek}

\usepackage{epsfig}

\usepackage{xspace}

\usepackage{graphicx}
\usepackage{amsmath}
\usepackage{amsfonts}
\usepackage{amssymb}


\newcommand{\Xomit}[1]{ }

\newtheorem{theorem}{Theorem}

\newtheorem{lemma}[theorem]{Lemma}

\newtheorem{proposition}[theorem]{Proposition}

\newenvironment{proof}[1][Proof]{\textbf{#1.} }{\ \rule{0.5em}{0.5em}}

\date{}

\begin{document}


\title{An algorithm for the weighted metric dimension \\ of two-dimensional grids}

\author{Ron Adar\thanks{Department of Computer Science, University of Haifa, Haifa,
Israel. \texttt{radar03@csweb.haifa.ac.il.}} \and Leah
Epstein\thanks{Department of Mathematics, University of Haifa,
Haifa, Israel. \texttt{lea@math.haifa.ac.il.}}}

\date{}

\maketitle


\begin{abstract}
A two-dimensional grid consists of vertices of the form $(i,j)$
for $1\leq i \leq m$ and $1\leq j \leq n$, for fixed $m,n\geq 2$.
Two vertices are adjacent if the $\ell_1$ distance between their
vectors is equal to $1$. A landmark set is a subset of vertices $L
\subseteq V$, such that for any distinct pair of vertices $u,v \in
V$, there exists a vertex of $L$ whose distances to $u$ and $v$
are not equal. We design an efficient algorithm for finding a
minimum landmark set with respect to total cost in a grid graph
with non-negative costs defined on the vertices.
\end{abstract}

\date{}

\maketitle

\section{Introduction}
Consider an undirected graph $G=(V,E)$. For $u,v \in V$, let
$d(u,v)$ denote the edge distance between these two vertices. A
vertex $x\in V$ separates $u$ and $v$ if $d(x,u) \neq d(x,v)$, and
in this case, $x$ is also called a separating vertex for $u$ and
$v$. A landmark set is a subset $L \subseteq V$ such that for any
pair of vertices $u \neq v$, $L$ has at least one vertex $y$ that
separates $u$ and $v$. The vertices of a landmark set $L$ are often referred to as {\it landmarks}.
In the algorithmic metric dimension
problem, the goal is to find a landmark set $L$ of minimum
cardinality. In the weighted version of this problem, a
non-negative cost (or weight) function $c:V\rightarrow \mathbb
Q^+$ is given. For $U \subseteq V$, the cost or weight of $U$ is
defined as $c(U)=\sum_{a \in U} c(a)$, and the goal is to find a
landmark set $L$ minimizing $c(L)$. The cardinality of a minimum
cardinality landmark set of $G$ is called the {\it metric
dimension} of $G$, and the cost of a minimum cost landmark set is
called the {\it weighted metric dimension} of $G$.

A two-dimensional grid with integer parameters $m$ and $n$ has
$|V|=m\cdot n$ vertices of the form $(i,j)$, where $1 \leq i \leq
m$ and $1\leq j \leq n$. For vertices $(i_1,j_1)$, $(i_2,j_2)$,
let $((i_1,j_1),(i_2,j_2)) \in E $ if (and only if)
$|i_1-i_2|+|j_1-j_2|=1$. The resulting distance between two
vertices is the $\ell_1$ distance between their vectors, that is,
$d((i_1,j_1),(i_2,j_2))=|i_1-i_2|+|j_1-j_2|$. This graph can be
visualized on the plane, such that the rows are numbered from top
to bottom, and its columns from left to right. The sides of the
grid are the top row (row $1$), the bottom row (row $m$), the
leftmost column (column $1$), and the rightmost column (column
$n$). The $j$th vertex in the $i$th row of the grid is denoted by
$(i,j)$. The vertices $(1,1)$, $(1,n)$, $(m,1)$, and $(m,n)$ are
called corners. That is, vertices of degree $2$ are corners, and
vertices of degrees below $4$ belong to sides. Other vertices (of
degree $4$) are called internal. Since a minimum cardinality
landmark set consists of  a single vertex if and only if the graph
is a path  \cite{KRR1996}, and the case of a path (a
one-dimensional grid) was completely studied \cite{KRR1996,ELW},
we assume that $m\geq 2$ and $n\geq 2$, and therefore any landmark
set will have at least two vertices. Corners that belong to the
same row or to the same column are called adjacent corners, and
otherwise they are non-adjacent or opposite corners. Sides that
share a corner are called adjacent sides, and otherwise they are
non-adjacent or opposite sides. We assume that the vertex costs
are given in a matrix of size $\Theta(mn)=\Theta(|V|)$, such that
any specific cost (the value $c(v)$ for a given vertex $v$) can be
retrieved in time $O(1)$. We let $c_{i,j}=c((i,j))$ for any $1\leq
i \leq m$ and $1\ leq j \leq n$. Let a double side consist of two
adjacent sides, excluding their common corner.  We say that two
vertices $x_1=(y_1,z_1)$ and $x_2=(y_2,z_2)$ are a {\it on a joint
diagonal} if $y_1+z_1=y_2+z_2$. Such pairs of vertices are of
particular interest as we should be careful regarding separating
them, and in particular, the corner vertex $(1,1)$ does not
separate any such pair. For two vertices $r_1=(a_1,b_1)$ and
$r_2=(a_2,b_2)$ such that $a_1\leq a_2$ and $b_1 \leq b_2$, we
define the sub-grid of $r_1$ and $r_2$ as the set of all vertices
whose first component is in $[a_1,a_2]$ and their second component
is in $[b_1,b_2]$.

In this work, we will use the standard term {\it minimal} for a
landmark set that is minimal with respect to set inclusion. We
will use the term {\it minimum} landmark set for a landmark set
that is minimum with respect to cost. A minimum cardinality
landmark set will be called {\it smallest}. As weights are
non-negative, there always exists a minimum cost landmark set that
is also a minimal landmark set. In some cases, when we search for
a minimum landmark set, we will only consider minimal landmark
sets as potential solutions. We will show, in particular, that the
cardinality of a minimal landmark set is either a positive number
in $\{2,4,\ldots,2\cdot \min\{m,n\}-2\}$ (note that the upper
bound was shown in \cite{ACM14}), or it is equal to $3$. We find
that if $m=n=2$, all minimal landmark sets have cardinality $2$,
if $\min\{m,n\}=2$ but $\max\{m,n\}>2$, all minimal landmark sets
have cardinalities of $2$ and $3$. We will show that any minimal
landmark set of cardinality at least $4$ has a specific form, and
we use dynamic programming to find a subset of minimum cost of
this form (there can be sets of this form that are landmark sets
but they are not minimal landmark sets). Moreover, it follows from
our results that the case of cardinality $3$ is the only possible
case of an odd cardinality of a minimal landmark set. We also
analyze minimal landmark sets of cardinalities $2$ and $3$. The
result for cardinality $2$ was obtained by Melter and Tomescu
\cite{MT} (and generalized by Khuller, Raghavachari, and Rosenfeld
\cite{KRR1996}), where landmark sets of minimum cardinality were
studied. The result for cardinality $3$ was obtained in
\cite{ACM14}, where properties of some minimal landmark sets are
studied. For completeness, and as the proofs some of these
properties are used later as well, we provide complete proofs.
These proofs are followed by efficient algorithms for finding such
sets. Our main algorithm applies several algorithms and provides a
minimum landmark set out of landmark sets of cardinalities $2$,
$3$, and at least $4$. The output, which is a set of minimum cost
out of the outputs, is a minimum landmark set. Our main result is
therefore an efficient (polynomial-time) algorithm for finding a
minimum (i.e., minimum cost) landmark set in a two-dimensional
grid graph. That is, we solve the algorithmic weighted metric
dimension problem on two-dimensional grid graphs. The cases of
landmark sets of cardinalities $2$ and $3$ are relatively simple,
and the main technical difficulty is to find a minimum landmark
set out of landmark sets of cardinality at least $4$. We will
observe that every such set is related to a sequence that follows
a pattern, which we will call a zigzag sequence.

Another variant of grid graphs, where the distances are according
to the $\ell_{\infty}$ norm was studied \cite{KRR1996,ST04}. This
first articles on the metric dimension problem were by Harary and
Melter \cite{HM1976} and by Slater \cite{Slat75}. The problem is
NP-hard \cite{KRR1996} and hard to approximate \cite{BE+06,DPL11}
for general graphs, and it was studied for specific graph classes
\cite{HM1976,Slat75,KRR1996,CEJO00,Babai,SSH02,CZ03,CH+07,ELW}.
Applications can be found in
\cite{BE+06,HM1976,MT,Chvatal,KRR1996,CEJO00}, where some of these
applications are relevant for weighted graphs (see also
\cite{ELW}).

\section{Main result}
We start with proving some simple but crucial properties.

\begin{lemma}
Any landmark set has at least one vertex of each double side.
\end{lemma}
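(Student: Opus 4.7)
The plan is to exhibit, for each double side, a pair of vertices that can only be separated by some vertex of that double side; this forces any landmark set to meet every double side.

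First I would focus on the double side formed by row $1$ and column $1$ minus their common corner $(1,1)$, i.e.\ the set $\{(1,j):2\leq j\leq n\}\cup\{(i,1):2\leq i\leq m\}$. The natural candidate pair is $u=(1,2)$ and $v=(2,1)$, the two neighbors of the excluded corner. These lie on a joint diagonal since $1+2=2+1$. The observation I would then verify is that every vertex not on this double side has equal distance to $u$ and to $v$: the corner $(1,1)$ itself is at distance $1$ from each, and for any vertex $(i,j)$ with $i\geq 2$ and $j\geq 2$ both $\ell_1$ distances collapse to $i+j-3$. Consequently only a vertex of the double side can separate $u$ from $v$, so any landmark set must contain at least one such vertex.

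The three remaining double sides are handled by the obvious reflective symmetries of the grid. In each case I would take the two neighbors of the excluded corner as the candidate pair (so $(1,n-1),(2,n)$ near $(1,n)$, $(m-1,1),(m,2)$ near $(m,1)$, and $(m-1,n),(m,n-1)$ near $(m,n)$); after relabeling these pairs lie on a joint diagonal and the same $\ell_1$ computation applies.

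The hard part, such as it is, is simply the distance calculation: one must check that $|i-1|+|j-2|$ and $|i-2|+|j-1|$ agree as soon as $i,j\geq 2$, which drops out of opening the absolute values. Nothing else in the argument carries any real subtlety, and the case analysis over the four double sides is forced by symmetry.
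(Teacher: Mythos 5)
Your proposal is correct and follows essentially the same route as the paper: the same witness pair $(1,2)$, $(2,1)$, the same computation showing every vertex with both coordinates at least $2$ (and the corner $(1,1)$) is equidistant from them, and the same appeal to symmetry for the other three double sides. The paper additionally verifies that double-side vertices do separate the pair, but that check is not needed for the stated conclusion, so your argument is complete as written.
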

\begin{proof}
Without loss of generality consider the first row and the first
column. We show that no vertex separates vertices $(1,2)$ and
$(2,1)$ except for vertices of this double side. For any vertex
$(a,b)$ such that $a \geq 2$ and $b \geq 2$, we find
$d((1,2),(a,b))=a+b-3$ and $d((2,1),(a,b))=a+b-3$ (any such
shortest path traverses $(2,2)$). Moreover, $d((1,2),(1,1)=1$ and
$d((2,1),(1,1)=1$. The remaining vertices are on the double side,
where any such vertex has either $a=1$ and $b>1$ or it has $a>1$
and $b=1$. If $a=1$ and $b\neq 1$, then $d((1,2),(a,b))=b-2$ and
$d((2,1),(a,b))=b$, and if $a \neq 1$ and $b=1$, then
$d((1,2),(a,b))=a$ and $d((2,1),(a,b))=a-2$. Therefore given a
landmark set $L$, at least one vertex of the double side must
belong to $L$.
\end{proof}

\begin{lemma}\label{opC}
No minimal landmark set contains two opposite corners.
\end{lemma}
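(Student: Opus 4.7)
The plan is to show that any two opposite corners separate exactly the same pairs of vertices. Once this is established, the statement follows immediately: if both opposite corners belong to a landmark set $L$, then removing either one leaves every pair still separated (by the remaining corner), contradicting minimality.

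The key step is the distance computation. Consider first the pair $(1,1)$ and $(m,n)$. For any vertex $(i,j)$, we have $d((1,1),(i,j)) = (i-1)+(j-1) = i+j-2$, and $d((m,n),(i,j)) = (m-i)+(n-j) = m+n-i-j$. Both distances are determined entirely by the value $i+j$ (the diagonal of $(i,j)$). Hence $(1,1)$ separates vertices $(i_1,j_1)$ and $(i_2,j_2)$ if and only if $i_1+j_1 \neq i_2+j_2$, and exactly the same condition characterizes when $(m,n)$ separates them. For the other opposite pair $(1,n)$ and $(m,1)$, an analogous computation yields $d((1,n),(i,j)) = (i-1)+(n-j) = i-j+n-1$ and $d((m,1),(i,j)) = (m-i)+(j-1) = -(i-j)+m-1$; both depend only on $i-j$, and so the two corners separate the same pairs.

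With these identities in hand, the argument concludes as follows. Suppose for contradiction that $L$ is a minimal landmark set containing a pair of opposite corners, say $x_1$ and $x_2$. By the observation above, for every pair of distinct vertices $u,v$, the corner $x_1$ separates $u,v$ if and only if $x_2$ does. Therefore the set $L \setminus \{x_2\}$ still contains a separator for every pair separated by $L$, so $L \setminus \{x_2\}$ is itself a landmark set, contradicting the minimality of $L$.

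The only potential subtlety is checking that the "depends only on $i+j$" (respectively $i-j$) phenomenon really does extend to pairs of vertices lying on a joint diagonal or anti-diagonal through a corner — but since the distances agree exactly, not merely up to some offset, no such subtlety arises and the argument is essentially a short calculation followed by a one-line minimality argument.
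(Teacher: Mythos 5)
Your proof is correct and follows essentially the same route as the paper: compute that both distances from a vertex to a pair of opposite corners are determined by the same quantity ($i+j$ or $i-j$), conclude the two corners separate exactly the same pairs, and then drop one of them to contradict minimality. The only difference is that you write out the second pair of opposite corners explicitly where the paper invokes symmetry, which is fine.
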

\begin{proof}
Without loss of generality, consider the corners $(1,1)$ and $(m,n)$. For any $x=(y,z)$, $d(x,(1,1))=y+z-2$ and $d(x,(m,n))=m+n-y-z$. Thus, for any two vertices, their distances to $(1,1)$ are distinct if and only if their distances to $(m,n)$ are distinct.
\end{proof}

\begin{lemma}\label{opSides}
For any landmark set, there is a pair of opposite sides of the
grid such that each one of these sides has a landmark.
\end{lemma}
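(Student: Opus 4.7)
My plan is to argue by contradiction. Suppose the conclusion fails; then for each of the two opposite pairs of sides (top/bottom and left/right), at least one side contains no landmark of $L$. Using the symmetries of the grid (row and column reflections), which permute the sides while preserving landmark sets, I may assume without loss of generality that the bottom row (row $m$) and the right column (column $n$) both contain no landmark. Equivalently, every landmark of $L$ lies in the sub-grid $[1,m-1]\times[1,n-1]$.

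The second and main step is to exhibit a specific pair of distinct vertices that no vertex in that sub-grid can separate; this will contradict the hypothesis that $L$ is a landmark set and finish the proof. The natural candidates are the two remaining vertices in the ``missing'' corner region, namely $u=(m-1,n)$ and $v=(m,n-1)$. They are distinct because $m,n\ge 2$, and they lie on a joint diagonal (as defined in the introduction). A direct calculation shows that for every $(a,b)$ with $a\leq m-1$ and $b\leq n-1$,
\[
d((a,b),u) = (m-1-a)+(n-b) = m+n-1-a-b = (m-a)+(n-1-b) = d((a,b),v),
\]
so no vertex of $L$ separates $u$ from $v$, giving the required contradiction.

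I do not foresee a real obstacle: the content of the argument is concentrated in choosing the pair $\{u,v\}$, after which the distance computation is immediate, and the initial reduction via symmetry is routine. One small point to double-check is that the symmetry reduction genuinely covers all four combinations in which one side of each opposite pair is landmark-free; this follows because reflecting rows maps top to bottom and reflecting columns maps left to right, so any of the four configurations can be transformed into the chosen one.
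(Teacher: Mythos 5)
Your proof is correct. It does, however, take a structurally different route from the paper's. The paper treats this lemma as an almost immediate consequence of its Lemma 1 (every double side contains a landmark): if some side $\Lambda$ is landmark-free, apply Lemma 1 to the two double sides containing $\Lambda$ to conclude that the two sides adjacent to $\Lambda$ -- which form an opposite pair -- each contain a landmark. You instead argue by contradiction, use the grid's reflection symmetries to reduce to the case where the bottom row and right column are both landmark-free, and then exhibit the inseparable pair $(m-1,n)$, $(m,n-1)$ directly. Your distance computation is exactly the mirrored version of the one the paper carries out inside the proof of Lemma 1 for the pair $(1,2)$, $(2,1)$, so in effect you re-derive the relevant special case of that lemma rather than citing it; indeed, once you have reduced to ``bottom and right are both empty,'' you could finish in one line by observing that the bottom-right double side then has no landmark, contradicting Lemma 1. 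What your version buys is self-containment and a single concrete witness pair; what the paper's version buys is brevity and a direct (non-contradiction) argument that identifies which opposite pair works, namely the two sides adjacent to any landmark-free side. Your symmetry reduction is sound: the four configurations of one empty side from each opposite pair are permuted transitively onto your chosen one by row and column reflections, which are graph automorphisms and hence preserve the landmark-set property.
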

\begin{proof}
If every side has a landmark, we are done. Otherwise, consider a
side $\Lambda$ without a landmark. Since every double side has a
landmark, each one of the two sides adjacent to $\Lambda$ has a
landmark (and these are two distinct landmark as the two sides are
disjoint).
\end{proof}

As mentioned above, the following was proved in \cite{MT}.

\begin{proposition}
A set that consists of exactly two vertices is a landmark set if
and only if these two vertices are adjacent corners.
\end{proposition}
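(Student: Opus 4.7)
For the $(\Leftarrow)$ direction, by the symmetries of the grid I may assume the adjacent corners are $u=(1,1)$ and $v=(1,n)$. For any vertex $x=(a,b)$ I compute $d(x,u)=a+b-2$ and $d(x,v)=a-b+n-1$. Since $a$ can be recovered from $d(x,u)+d(x,v)$ and $b$ from $d(x,u)-d(x,v)$, distinct vertices receive distinct pairs of distances, so at least one of $u,v$ separates every pair. This is the easy half.

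For the $(\Rightarrow)$ direction, let $L=\{u,v\}$ be a landmark set. The first lemma of this section forces $L$ to intersect each of the four double sides, and since an internal vertex meets no double side while each boundary vertex meets exactly two, $u$ and $v$ must each cover exactly two double sides and together cover all four. An inspection shows that a corner meets the two double sides whose excluded common corner is one of its two neighbouring corners, while a non-corner boundary vertex meets the two double sides containing its own side. Enumerating complementary pairs of double sides and matching them to vertex types, the only $2$-subsets covering all four double sides are: (a) two adjacent corners, (b) two opposite corners, (c) $(1,k)$ and $(m,l)$ with $1<k,l<n$, and (d) $(k,1)$ and $(l,n)$ with $1<k,l<m$. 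Lemma~\ref{opC} rules out (b), and (a) is precisely the conclusion we want, so only (c) and (d) remain to be excluded.

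By the row/column symmetry of the grid it suffices to treat (c), so write $u=(1,k)$ and $v=(m,l)$ with $1<k,l<n$. In each sub-case I will exhibit two distinct vertices that neither $u$ nor $v$ separates, contradicting that $L$ is a landmark set. If $k=l$, the pair $(1,k-1),(1,k+1)$ lies at distance $1$ from $u$ and at distance $m$ from $v$. If $k\neq l$, say $k<l$, the pair $x_1=(2,k)$ and $x_2=(1,k+1)$ both lie at distance $1$ from $u$, and using the identity $|b-k|+|b-l|=l-k$ valid for $b\in[k,l]$ one checks that both lie at distance $m-2+l-k$ from $v$, so again $L$ fails. The main obstacle is precisely this last sub-case, where the two non-corner landmarks are on opposite rows but offset horizontally; the key idea is to exploit the flat level sets of the $\ell_1$-distance in the strip $k\le b\le l$ to produce a pair of vertices that is indistinguishable from both landmarks simultaneously.
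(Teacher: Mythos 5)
Your proof is correct and follows essentially the same route as the paper: the forward direction is the same distance computation (recovering both coordinates from the two distances to a pair of adjacent corners), and the backward direction rests on the same double-side lemma and exhibits the same unseparated witness pairs $(1,k-1),(1,k+1)$ and $(2,k),(1,k+1)$. One harmless slip: two opposite corners in fact lie in the \emph{same} two double sides (e.g.\ $(1,1)$ and $(m,n)$ both avoid the double side whose excluded corner is $(1,1)$), so your case (b) never actually arises from the covering enumeration --- which only strengthens the argument, since invoking Lemma~\ref{opC}, which speaks of \emph{minimal} landmark sets, to exclude a set from being a landmark set would strictly require the paper's observation that every two-element landmark set is automatically minimal.
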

\begin{proof}
First, note that a landmark set of cardinality $2$ must be minimal
as any landmark set for a graph that is not a path has cardinality
of at least $2$ \cite{KRR1996}.

Consider a set $A=\{v_1=(a_1,b_1),v_2=(a_2,b_2)\}$, where either $a_1\neq a_2$ or $b_1\neq b_2$ or both.

First, assume that $v_1$ and $v_2$ are adjacent corners, and
without loss of generality, $a_1=a_2=1$, $b_1=1$, and $b_2=n$.
Consider two distinct vertices $x_1=(y_1,z_1)$ and
$x_2=(y_2,z_2)$. For $i=1,2$, we have
$d(x_i,v_1)=|y_i-a_1|+|z_i-b_1|=y_i+z_i-2$. If $x_1$ and $x_2$ are
not on a joint diagonal, $y_1+z_1 \neq y_2+z_2$, and we have
$d(x_1,v_1)\neq d(x_2,v_1)$, so $v_1$ separates them. If $x_1$ and
$x_2$ are on a joint diagonal, then for $i=1,2$, we have
$d(x_i,v_2)=|y_i-a_2|+|z_i-b_2|=y_i-1+m-z_i$. Since $y_1+z_1 =
y_2+z_2$, we have $d(x_1,v_2) = y_1-z_1+m-1=y_2+z_2-2z_1+m-1$
while $d(x_2,v_2)=y_2-z_2+m-1$. If $d(x_1,v_2)=d(x_2,v_2)$, we get
$z_1=z_2$, and therefore by $y_1+z_1 = y_2+z_2$, we also find
$y_1=y_2$, proving $x_1=x_2$. Thus, if $x_1\neq x_2$, at least one
of $v_1$ or $v_2$ separates them. This shows that $A$ is a
landmark set.

If $v_1$ and $v_2$ are opposite corners, then by Lemma \ref{opC}
cannot be a minimal landmark set. Next, assume that at least one
of $v_1$ and $v_2$ is not a corner. Assume without loss of
generality that $A$ does not contain any corner, except for
possibly $(1,1)$. If $A$ contains a corner, the other vertex of
$A$ is a vertex of the double side consisting of the first row and
the first column. This last vertex is not a corner by the
assumption that $(1,n),(m,1),(m,n)\notin A$ (and since $(1,1)$
does not belong to this double side). In this case the landmark
set has no vertex of the double side consisting of the last row
and the last column, contradicting the property that it is a
landmark set.

If $A$ does not contain any corner, then since any landmark set
has a pair of vertices on opposite sides, its two vertices are on
opposite sides. Assume without loss of generality (due to
symmetry) that $A=\{(1,z),(m,z')\}$, where $2 \leq z \leq z' \leq
m-1$. If $z=z'$, then $d((1,z-1),(1,z))=1$, $d((1,z+1),(1,z))=1$,
$d((1,z-1),(m,z))=m$, and $d((1,z+1),(m,z))=m$, so no vertex of
$A$ separates $(1,z-1)$ and $(1,z+1)$. Otherwise,
$d((1,z+1),(1,z))=1$, $d((2,z),(1,z))=1$,
$d((1,z+1),(m,z'))=m+z'-z-2$, and $d((2,z),(m,z'))=m+z'-z-2$, so
no vertex of $A$ separates $(1,z+1)$ and $(2,z)$.
\end{proof}

\begin{lemma}\label{ab}
Let $z$ satisfy $1 \leq z < n$, and let $(a,b)$ be a vertex that
separates the vertices $(1,z+1)$ and $(2,z)$. If $b \leq z$, then
$a>1$ and if $b \geq z+1$, then $a=1$.
\end{lemma}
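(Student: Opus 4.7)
The plan is to argue the contrapositive in each of the two cases by direct distance computation, exploiting the fact that $(1,z+1)$ and $(2,z)$ lie on a joint diagonal (their coordinate sums both equal $z+2$), which already forces many potential separators to fail.

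First I would set up the general formulas
\[
d((a,b),(1,z+1)) = |a-1| + |b-z-1|, \qquad d((a,b),(2,z)) = |a-2| + |b-z|,
\]
and then split according to the sign of $b - z$.

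For the first implication, suppose $b \leq z$ and, for contradiction, $a = 1$. Then $|a-1| = 0$, $|a-2| = 1$, $|b-z-1| = z+1-b$, and $|b-z| = z-b$, so both distances become $z+1-b$. Hence $(1,b)$ with $b \leq z$ fails to separate $(1,z+1)$ from $(2,z)$, contradicting the assumption on $(a,b)$. Therefore $a > 1$.

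For the second implication, suppose $b \geq z+1$ and, for contradiction, $a \geq 2$. Then $|a-1| = a-1$, $|a-2| = a-2$, $|b-z-1| = b-z-1$, and $|b-z| = b-z$, so both distances equal $a+b-z-2$. Thus such a vertex does not separate the pair, again contradicting the assumption on $(a,b)$, so $a = 1$. The only obstacle is keeping the absolute value signs straight, and the case split $b \leq z$ versus $b \geq z+1$ exhausts all possibilities for the second coordinate since these are integers.
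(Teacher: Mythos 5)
Your proof is correct and follows essentially the same route as the paper's: a direct computation of both $\ell_1$ distances in each of the two cases $b\leq z$ and $b\geq z+1$, showing the distances coincide unless the claimed condition on $a$ holds. The only cosmetic difference is that you phrase both cases as contrapositives (assuming $a=1$, respectively $a\geq 2$, for contradiction), whereas the paper leaves $|a-2|$ unevaluated in the second case and reads off $a=1$ from the separation hypothesis; the content is identical.
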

\begin{proof}
If $b \leq z$ and $a=1$, then
 $d((1,z+1),(a,b))=d((2,z),(a,b))=1+z-b$. This is a contradiction to the role of $(a,b)$ as a separating vertex for
$(1,z+1)$ and $(2,z)$, and therefore in the case $b \leq z$, we
have $a>1$. Otherwise, assume that $b\geq z+1$ holds. We have
$d((1,z+1),(a,b))=a-1+b-z-1=a+b-z-2$, $d((2,z),(a,b))=|a-2|+b-z$.
Thus, as $(a,b)$ separates these two vertices, $a=1$.
\end{proof}

In the next lemma we consider a sub-grid of two vertices
$(a_1,b_1)$, $(a_2,b_2)$ such that $a_1 \leq a_2$ and $b_1 \leq
b_2$, and a vertex $(a,b)$ that is either on the left hand side of
this sub-grid ($b\leq b_1$ and $a_1 < a \leq a_2$) or it is above
this sub-grid ($a\leq a_1$ and $b_1<b \leq b_2$). We also consider
the smaller sub-grid whose upper left corner is $(a,b_1)$ in the
first case and $(a_1,b)$ in the second case, and the other corner
remains $(a_2,b_2)$. We show that $(a,b)$ separates any pair of
vertices on a joint diagonal that are not both vertices of the
smaller sub-grid.

\begin{lemma}\label{cutting}
Let $(a_1,b_1)$, $(a_2,b_2)$ be  grid vertices  such that $a_1
\leq a_2$ and $b_1 \leq b_2$. Let $u=(a,b)$ be a vertex such that
either $a_1 < a \leq a_2$ and $b\leq b_1$ hold or $a \leq a_1$ and
$b_1<b\leq b_2$ hold. Then, $(a,b)$ separates any pair of distinct
vertices $v_1=(x_1,y_1)$ and $v_2=(x_2,y_2)$ that are on a joint
diagonal and $x_1<x_2$ (so $x_1+y_1=x_2+y_2$ and $y_1>y_2$ hold)
under the required conditions, where the conditions on $(a,b)$ are
as follows. In the first option for $(a,b)$, it holds that
$a_1\leq x_1 < a$, $x_1< x_2 \leq a_2$, and $b_1\leq y_2 < y_1
\leq b_2$, and in the second option for $(a,b)$, it holds that
$a_1\leq x_1 < x_2 \leq a_2$, $b_1\leq y_2 < b$, and $y_2< y_1
\leq b_2$.
\end{lemma}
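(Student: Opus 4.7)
The plan is to compute $d(u,v_1) - d(u,v_2)$ directly from the $\ell_1$ formula and read off the signs inside the absolute values from the hypotheses. The two options for the position of $u$ are symmetric under swapping the two coordinates, so I would carry out the argument in detail for the first option (where $u$ sits to the left of the sub-grid) and then invoke this symmetry for the second.

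In the first option, several signs are pinned down at once: from $b \leq b_1 \leq y_2 < y_1$ both $|y_i - b|$ simplify to $y_i - b$, and from $x_1 < a$ the term $|x_1 - a|$ becomes $a - x_1$. The only sign still free is that of $x_2 - a$, so I would split into the sub-cases $x_2 \leq a$ and $x_2 > a$. In each sub-case $d(u,v_1) - d(u,v_2)$ reduces to a short linear expression, and substituting the joint-diagonal identity $y_1 - y_2 = x_2 - x_1$ collapses it to $2(y_1 - y_2) > 0$ in the first sub-case and to $2(a - x_1) > 0$ in the second. Either way the difference is strictly positive, so $u$ separates $v_1$ and $v_2$.

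For the second option the argument is the mirror image: now $|x_i - a| = x_i - a$ for $i=1,2$ and $|y_2 - b| = b - y_2$, the free sign is that of $y_1 - b$, and the split on $y_1 \leq b$ versus $y_1 > b$ leads, via the same substitution $x_2 - x_1 = y_1 - y_2$, to $-2(y_1 - y_2)$ and $2(y_2 - b)$ respectively, both strictly negative by the hypotheses. Hence again $d(u,v_1) \neq d(u,v_2)$.

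The computations themselves are routine; the one place a little care is needed is invoking the joint-diagonal identity at exactly the moment when the sub-case in which $x_2$ (respectively $y_1$) straddles the boundary appears to leave a mixed expression — that substitution is what makes the expression collapse to a quantity whose sign is forced purely by the constraints $x_1 < a$ (respectively $y_2 < b$) and $y_1 > y_2$. I do not foresee a genuine obstacle beyond bookkeeping of these sub-cases.
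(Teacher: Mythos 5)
Your proposal is correct and follows essentially the same route as the paper: fix the signs of the absolute values forced by the hypotheses, split on the one remaining free sign ($x_2$ versus $a$ in the first option), and use the joint-diagonal identity $x_2-x_1=y_1-y_2$ to collapse the difference of distances to $2(y_1-y_2)$ or $2(a-x_1)$, both nonzero. The paper phrases this as a proof by contradiction (deriving $x_1=x_2$ or $a=x_1$ from assumed equality) and handles the second option by symmetry, but the computation is identical to yours.
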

\begin{proof}
If $a_1=a_2$ or $b_1=b_2$, there are no such pairs $v_1,v_2$. Thus
we assume $a_1<a_2$ and $b_1<b_2$. Since the two options for
$(a,b)$ are analogous, we will prove the property for the first
option.

We have $|x_1-a|=a-x_1$, $|y_1-b|=y_1-b$, and $|y_2-b|=y_2-b$.
Thus, $d(v_1,u)=y_1-x_1+a-b$ and $d(v_2,u)=|x_2-a|+y_2-b$. Assume
by contradiction that $d(v_1,u)=d(v_2,u)$. We get
$|x_2-a|=y_1-x_1-y_2+a$, and by using $x_1+y_1=x_2+y_2$, we have
$|x_2-a|=x_2-2x_1+a$. If $x_2 \leq a$, this implies $x_1=x_2$, a
contradiction.  If $x_2 > a$, this implies $a=x_1$, a
contradiction as well.
\end{proof}

Obviously, in the case $a = a_2$ and $b\leq b_1$, and in the case
$a \leq a_1$ and $b = b_2$, the lemma shows that $(a,b)$ separates
any pair of vertices on a joint diagonal of the sub-grid of
$(a_1,b_1)$ and $(a_2,b_2)$.

In the next lemma we show that it is possible that while every
pair of vertices should be separated by any landmark set, it is
possible to restrict the set of pairs that should be tested. More
precisely, given two landmarks on opposite sides (we consider the
case of the top row and the bottom row, such that the vertex of
the top row is strictly to the left of the vertex of the bottom
row), creating a sub-grid, it will be sufficient to ensure for
every pair of vertices on a joint diagonal, both being vertices of
the sub-grid, are separated.

\begin{lemma}\label{onlydiag}
Consider $X \subseteq V$, where $X$ contains two side vertices (of
opposite sides) $t_1=(1,z_1)$, $t_2=(m,z_2)$ such that $1\leq
z_1<z_2 \leq n$. If for any pair of vertices on a joint diagonal
of the sub-grid of $t_1$ and $t_2$, $(a_1,b_1)$ and $(a_2,b_2)$
such that $a_1<a_2$ (so $a_1+b_1=a_2+b_2$ and $z_1\leq b_2<b_1\leq
z_2$), $X$ has a vertex that separates $(a_1,b_1)$ and
$(a_2,b_2)$, then $X$ is a landmark set.
\end{lemma}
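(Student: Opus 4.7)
The plan is to show that $X$ separates every pair of distinct vertices $u=(x_1,y_1)$, $v=(x_2,y_2)$ by partitioning into cases according to the positions of $y_1$ and $y_2$ relative to the column interval $[z_1,z_2]$ of the sub-grid. In almost every case the separator will turn out to be either $t_1$ or $t_2$; only one residual configuration requires bringing in the hypothesis, and it does so through a carefully chosen joint-diagonal pair inside the sub-grid.

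First I would dispose of the ``easy'' configurations by short distance calculations. If $y_1,y_2\in[z_1,z_2]$, so both vertices lie in the sub-grid, then on this region $d(\cdot,t_1)=x+y-1-z_1$, so $t_1$ separates $u$ and $v$ unless they are on a joint diagonal, in which case the hypothesis supplies a separator in $X$. If $y_1$ and $y_2$ are both strictly on the same side of $[z_1,z_2]$, or if exactly one of them is in $[z_1,z_2]$ and the other strictly outside, then requiring that neither $t_1$ nor $t_2$ separates $u,v$ yields a small linear system whose only solution is $u=v$ (in the same-side case) or forces the outside coordinate to equal $z_1$ or $z_2$ (in the mixed case), a contradiction. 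This leaves only the configuration in which $y_1$ and $y_2$ are on opposite sides of $[z_1,z_2]$, say without loss of generality $y_1<z_1<z_2<y_2$.

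In this last configuration, combining ``$t_1$ does not separate'' and ``$t_2$ does not separate'' by adding and subtracting the two equalities produces the very restrictive conjunction $y_1+y_2=z_1+z_2$ and $x_1-x_2=z_2-z_1>0$; set $\del:=z_1-y_1=y_2-z_2$. The key move is to consider the ``projected'' pair $P=\{(x_2,z_2),(x_1,z_1)\}$, which lies in the sub-grid and satisfies $x_1+z_1=x_2+z_2$, so it is a joint-diagonal pair covered by the hypothesis; pick $w=(p,q)\in X$ that separates $P$. The main obstacle is to show that this same $w$ also separates $u$ and $v$, and the delicate sub-case is when $q$ lies outside $[z_1,z_2]$. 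Writing $\alpha(p)=|p-x_1|-|p-x_2|$, $\beta(q)=|q-z_1|-|q-z_2|$, $\gamma(q)=|q-y_1|-|q-y_2|$, separation of $P$ is $\alpha+\beta\neq 0$ and separation of $u,v$ is $\alpha+\gamma\neq 0$. A case analysis on $q$ shows $\beta=\gamma=2q-z_1-z_2$ on $[z_1,z_2]$ (so the two conditions coincide there); for $q<z_1$ one has $\beta=-(z_2-z_1)$ and $|\alpha|\le z_2-z_1$, so $\alpha\neq -\beta$ forces $\alpha<z_2-z_1$, and splitting further on whether $q<z_1-\del$ or $z_1-\del\le q<z_1$ shows $\alpha+\gamma$ is strictly negative; the case $q>z_2$ is symmetric and forces $\alpha+\gamma$ strictly positive. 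Hence $w$ separates $u,v$ in every sub-case, finishing the proof.
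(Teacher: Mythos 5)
Your proof is correct, and the overall skeleton coincides with the paper's: both arguments split on the positions of the two column indices relative to $[z_1,z_2]$, and your ``same side,'' ``both inside,'' and ``mixed'' cases are exactly the paper's Cases 1, 2 and 4, handled identically (via $t_1,t_2$ alone, or via the hypothesis when the pair is a joint-diagonal pair of the sub-grid). The genuine divergence is in the hard case $y_1<z_1<z_2<y_2$. The paper fixes the auxiliary joint-diagonal pair $(x_2,z_1+1)$, $(x_2+1,z_1)$ at the corner of the sub-grid, first pins down which quadrant its separator $(a',b')$ can lie in, and then verifies by an eight-way computation that $(a',b')$ also separates $v_1,v_2$. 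You instead project the offending pair onto the sub-grid, taking $P=\{(x_1,z_1),(x_2,z_2)\}$ (legitimately a joint-diagonal pair of the sub-grid, since $x_1-x_2=z_2-z_1$), and show that any separator $w=(p,q)$ of $P$ separates $v_1,v_2$: on $q\in[z_1,z_2]$ the two separation conditions are literally the same equation because $\beta(q)=\gamma(q)=2q-z_1-z_2$, and for $q$ outside $[z_1,z_2]$ the sign of $\alpha+\gamma$ is forced. I checked the outside cases: for $q<z_1$ one gets $\gamma\leq -(z_2-z_1)-2$ while $|\alpha|\leq z_2-z_1$, so $\alpha+\gamma<0$ automatically --- your extra appeal to $\alpha\neq-\beta$ there is harmless but not actually needed. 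Your route buys a shorter and more conceptual verification (three sub-cases on $q$ instead of eight on $(a',b')$ versus $(x_1,y_1)$), and it makes transparent \emph{why} the hypothesis restricted to the sub-grid suffices: separating a pair outside the sub-grid reduces to separating its projection inside. The paper's choice of auxiliary pair has the minor advantage of reusing the corner pair already analyzed via Lemma \ref{ab}, but is computationally heavier.
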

\begin{proof}
If $X$ contains a vertex of the form $(1,z')$ such that
$z_1<z'<z_2$, it is sufficient to prove the claim for $(1,z')$ and
$(m,z_2)$ (and this will imply the claim for $(1,z_1)$ and
$(m,z_2)$). Thus, without loss of generality we will assume that
no such vertex belongs to $X$. Similarly, we assume that no vertex
of the form $(m,z')$ with $z_1<z'<z_2$ belongs to $X$.

Consider the vertices $(1,z_1+1)$ and $(2,z_1)$. These vertices
are on a joint diagonal and they are vertices of the considered
sub-grid, and thus by the conditions of the lemma, $X$ has a
vertex that separates them. Let this vertex be $(a,b)$. By Lemma
\ref{ab}, none of $t_1$ and $t_2$ separates these two vertices,
and thus $(a,b)$ is another vertex satisfying $a=1$ and $b\geq
z_2$ or $a>1$ and $b\leq z_1$ (the case where $z_1+1 \leq b \leq
z_1-1$ is impossible since in this case $a=1$ and we assume that
no vertex $(1,z')$ with $z_1<z'<z_2$ belongs to $X$).

Let $v_1=(x_1,y_1)$ and $v_2=(x_2,y_2)$, such that $y_1 \leq y_2$
be a pair of distinct vertices. Assume that they are not separated
by $(1,z_1)$,  by $(m,z_2)$, or by $(a,b)$. That is, we assume
$d(v_1,(1,z_1))=d(v_2,(1,z_1))$,  $d(v_1,(m,z_2))=d(v_1,(m,z_2))$
and $d(v_1,(a,b))=d(v_2,(a,b))$. We find
$d(v_i,(1,z_1))=x_i-1+|y_i-z_1|$ and
$d(v_i,(m,z_2))=m-x_i+|y_i-z_2|$. If $(a,b)$ is such that $a=1$
and $b \geq z_2$, then $d(v_i,(a,b))=x_i-1+|y_i-b|$, and otherwise
$d(v_i,(a,b))=|x_i-a|+|y_i-b|$.

We consider all possible cases with respect to the columns of
$v_1$ and $v_2$.

\begin{figure} [h!]
\hspace{0.8in}
\includegraphics[angle=0,width=0.7\textwidth]{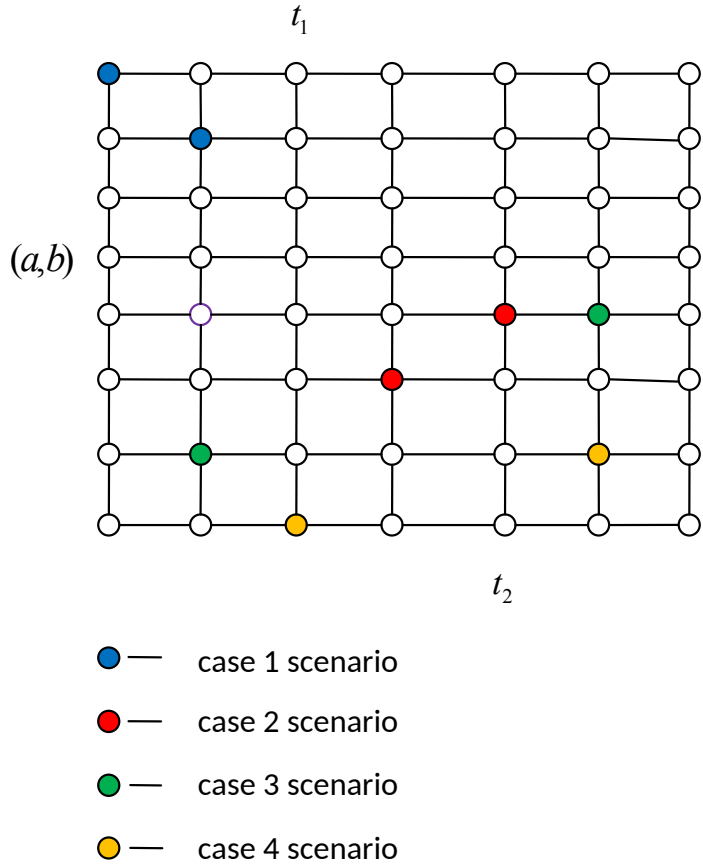}
\caption{An example of a grid with $8$ rows and $7$ columns. All possible cases of $v_1$ and $v_2$ described in the proof of Lemma \ref{onlydiag}, for the sub-grid of $t_1=(1,3)$ and $t_2=(8,5)$, where $(a,b)=(4,1)$ is a vertex separating (for example) the pair $(2,3)$,$(1,4)$ and the pair $(5,3)$,$(3,5)$, but it does not separate the pair $(5,3)$,$(4,4)$. \label{casesexmp}} 
\end{figure}

{\noindent{\bf Case 1. \ }} In this case either $y_1\leq y_2 <z_1$
or $z_2<y_1\leq y_2$ holds. That is, both $v_1$ and $v_2$ are not
vertices of the sub-grid, and they are on the same side of the
sub-grid (either to the left or to the right of it, see for example the blue vertices in figure \ref{casesexmp}).

If $y_2 < z_1$, we have $d(v_i,(1,z_1))=x_i-1+z_1-y_i$ and
$d(v_i,(m,z_2))=m-x_i+z_2-y_i$. We find $x_1-y_1=x_2-y_2$ and
$x_1+y_1=x_2+y_2$. Similarly, if $y_1
> z_2$, we have $d(v_i,(1,z_1))=x_i-1+y_i-z_1$ and
$d(v_i,(m,z_2))=m-x_i+y_i-z_2$, and in this case We find
$x_1-y_1=x_2-y_2$ and $x_1+y_1=x_2+y_2$ hold as well. In both
cases we find that $x_1=x_2$ and $y_1=y_2$ hold, contradicting the
property that $v_1 \neq v_2$.

{\noindent{\bf Case 2. \ }} In this case  $z_1 \leq y_1 \leq y_2
\leq z_2$ holds. That is, both vertices are vertices of the
sub-grid (see for example the red vertices in figure \ref{casesexmp}).

We have $d(v_i,(1,z_1))=x_i-1+y_i-z_1$,  and
therefore $x_1+y_1=x_2+y_2$ holds. In this case $v_1$ and $v_2$
are on a joint diagonal of the sub-grid, and by assumption there
is a vertex of $X$ separating them.

{\noindent{\bf Case 3. \ }} In this case $y_1 < z_1$ and $y_2 >
z_2$, that is, both $v_1$ and $v_2$ are not vertices of the
sub-grid, one of them ($v_1$) is to the left of the sub-grid and
the other one ($v_2$) is on the right (see for example the green vertices in figure \ref{casesexmp}).

We have $d(v_1,(1,z_1))=x_1-1+z_1-y_1$,
$d(v_2,(1,z_1))=x_2-1+y_2-z_1$, $d(v_1,(m,z_2))=m-x_1+z_2-y_1$,
and $d(v_2,(m,z_2))=m-x_2+y_2-z_2$. This proves
$x_1+z_1-y_1=x_2+y_2-z_1$ and $x_1+y_1-z_2=x_2-y_2+z_2$. Taking
the sum and difference we get $x_1-x_2=z_2-z_1>0$ (so $x_1>x_2$) and
$y_1+y_2=z_1+z_2$.

Let $(a',b') \in X$ be a vertex that separates the vertices
$(x_2,z_1+1)$ and $(x_2+1,z_1)$. These are vertices of $G$ since
$x_2+1\leq x_1$ and $z_1+1\leq z_2$. These two vertices are
vertices of the sub-grid on a joint diagonal, and therefore
$(a',b')$ exists according to the conditions of the lemma. First, we analyze
the values of $a'$ and $b'$. If $a' \leq x_2$ and $b' \leq z_1$, we
find $d((x_2,z_1+1),(a',b'))=x_2-a'+z_1+1-b'$ and
$d((x_2+1,z_1),(a',b'))=x_2+1-a'+z_1-b'$, so $(a',b')$ does not
separate the two vertices. If $a' \geq x_2+1$ and $b' \geq z_1+1$,
we find $d((x_2,z_1+1),(a',b')=a'-x_2+b'-z_1-1$ and
$d((x_2+1,z_1),(a',b'))=a'-x_2-1+b'-z_1$, so $(a',b')$ does not
separate the two vertices. Thus either $a' \leq x_2$ and $b' \geq
z_1+1$ hold or $a' \geq x_2+1$ and $b' \leq z_1$ hold.

We consider the two cases. In the
first case, by $x_1>x_2\geq a'$ and $y_1<z_1<b'$, we have
$d(v_1,(a',b'))=x_1-a'+b'-y_1$. Moreover,
$d(v_2,(a',b'))=x_2-a'+|y_2-b'|$. These two values are distinct
since if $y_2 \geq b'$, then
$x_1-x_2+b'-y_1-|y_2-b'|=x_1-x_2-y_1-y_2+2b'=(z_2-z_1)-(z_1+z_2)+2b'=2(b'-z_1)>0$,
and if $y_2 \leq b'$, then
$x_1-x_2+b'-y_1-|y_2-b'|=x_1-x_2-y_1+y_2=(z_2-z_1)+y_2-(z_1+z_2-y_2)=2(y_2-z_1)>0$, since $y_2>z_2>z_1>y_1$.

In the second case, by $x_2\leq a'-1<a'$ and $b' \leq z_1<y_2$, we have $d(v_2,(a',b'))=a'-x_2+y_2-b'$.
We show that the two values $d(v_1,(a',b'))$ and $d(v_2,(a',b'))$ are distinct.

If $y_1\leq b'$ and $x_1\leq a'$, $d(v_1,(a',b'))=a'-x_1+b'-y_1$. The two distances are distinct since we have $x_1+y_1+y_2-x_2-2b'=z_2-z_1+z_1+z_2-2b'=2(z_2-b')>0$, since $z_2>z_1\geq b'$.

If $y_1\leq b'$ and $x_1\geq a'+1$, $d(v_1,(a',b'))=x_1-a'+b'-y_1$. The two distances are distinct since we have $y_1-x_1+y_2-x_2+2a'-2b'=(z_1+z_2)-2x_2+z_1-z_2+2a'-2b'=2(z_1-x_2+a'-b')>0$ since $a' \geq x_2 + 1$ and $z_1 \geq b'$.

If $y_1\geq b'+1$ and $x_1\leq a'$, $d(v_1,(a',b'))=a'-x_1+y_1-b'$. The two distances are distinct since we have $-y_1+x_1+y_2-x_2=z_2-z_1+2y_2-z_1-z_2=2(y_2-z_1)>0$ (by $y_2>z_2>z_1$).

If $y_1\geq b'+1$ and $x_1\geq a'+1$, $d(v_1,(a',b'))=x_1-a'+y_1-b'$. The two distances are distinct since we have $-y_1-x_1+2a'+y_2-x_2=z_1+z_2-2y_1+2a'-2x_2+z_2-z_1=2(z_2-y_1+a'-x_2)>0$ since $a' \geq x_2+1$ and $z_2>y_1$.

{\noindent{\bf Case 4. \ }} In this case we either have $y_1 <
z_1$ and $z_1 \leq y_2 \leq z_2$ or we have $z_1 \leq y_1 \leq
z_2$ and $y_2 > z_2$. That is, one vertex is a vertex of the
sub-grid, while the other one is not a vertex of the sub-grid
(see for example the yellow vertices in figure \ref{casesexmp}).

In the first option, we have $d(v_1,(1,z_1))=x_1-1+z_1-y_1$,
$d(v_2,(1,z_1))=x_2-1+y_2-z_1$, $d(v_1,(m,z_2))=m-x_1+z_2-y_1$,
and $d(v_2,(m,z_2))=m-x_2+z_2-y_2$. This proves
$x_1+z_1-y_1=x_2+y_2-z_1$ and $x_1+y_1=x_2+y_2$. The two
properties together imply $y_1=z_1$, which is a contradiction. In
the second option, we have $d(v_1,(1,z_1))=x_1-1+y_1-z_1$,
$d(v_2,(1,z_1))=x_2-1+y_2-z_1$, $d(v_1,(m,z_2))=m-x_1+z_2-y_1$,
and $d(v_2,(m,z_2))=m-x_2+y_2-z_2$. This proves $x_1+y_1=x_2+y_2$
and $x_1+y_1-z_2=x_2-y_2+z_2$. The two properties together imply
$y_2=z_2$, which is a contradiction.
\end{proof}

When we say that a property holds up to rotation of  the grid or
mirroring it, we mean that we consider the same grid graph but the
numbering of rows and columns is different. In particular, there
are four choices for which corner is $(1,1)$, and given that
choice, there are two choices regarding which one of its two
neighbors is denoted by $(1,2)$ (and which one is denoted by
$(2,1)$). Fixing $(1,1)$ and $(1,2)$, the numbering of the other
vertices is unique. Thus, there are eight ways to number the
vertices.

In the following analysis, we will assume that the top row and
bottom row are two opposite sides that contain landmarks
(otherwise, if one of these sides does not contain a landmark, we
can rotate the grid). Obviously, a set may contain more than two
vertices on two opposite sides. Given a set of vertices, out of
pairs of vertices such that one is on the top row and the other is
on the bottom row, we will always select two vertices such that
the absolute value of the difference between the indices of their
columns is minimal. By possibly mirroring the grid, given two such
vertices, we will assume that the index of the column of the
vertex of the bottom row is not smaller of the index of the column
of the vertex of the top row. Thus, any subset of vertices which
we will discuss has two vertices $(1,z)$ and $(m,z')$, where
$z'\geq z$. It is obviously possible that a landmark set will
contain additional vertices of these two sides, and it may contain
vertices of other sides, and internal vertices. By the choice of
these two vertices from a given subset of vertices (such that $|z'-z|$ is minimal), no vertex
$(1,z'')$ such that $z<z''<z'$ is an element of the set and no
vertex $(m,z'')$ such that $z<z''<z'$ is an element of the set.
Moreover, if $z\neq z'$, $(1,z')$ and $(m,z)$ are also not
elements of this set. Since every landmark set has such a pair of
landmarks on opposite sides (by Lemma \ref{opSides}), in what follows we only consider sets that contain this pair
of vertices. We will assume that $1 < z<z'\leq m$ or $1\leq z<z'<m$ holds (that is, at most one of $(1,z)$ and $(m,z')$ is a corner), as a minimal landmark set does not contain a pair of opposite corners. Moreover, in the case where $z=z'$ and either $z=1$ or $z=n$, these two vertices are adjacent corners, and a minimal landmark set containing these two vertices has cardinality $2$. Thus, in the analysis of landmark sets of cardinality at least $3$, we assume that if $z=z'$, then $1<z<n$ holds.


\begin{lemma}\label{lemab}
For a landmark set $L$ such that $(1,z), (m,z') \in L$, where $1\leq z<z'\leq m$, $L$ has at least one vertex $(a,b)$ such that either $b \leq z$ and $a>1$ hold or $b > z'$ and $a=1$ hold.
\end{lemma}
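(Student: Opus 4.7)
The plan is to produce the required vertex $(a,b)$ by forcing $L$ to separate a specific pair on a joint diagonal and then invoking Lemma \ref{ab}.

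First I would single out the pair $v_1=(1,z+1)$ and $v_2=(2,z)$, which lie on a joint diagonal since $1+(z+1)=2+z$. A direct computation shows that neither landmark $(1,z)$ nor $(m,z')$ separates this pair: indeed $d(v_1,(1,z))=d(v_2,(1,z))=1$, and using $z<z'$ one gets $d(v_1,(m,z'))=m-1+(z'-z-1)=m+z'-z-2$ and $d(v_2,(m,z'))=m-2+(z'-z)=m+z'-z-2$. Since $L$ is a landmark set, some further vertex $(a,b)\in L$ must separate $v_1$ and $v_2$, so Lemma \ref{ab} applies to $(a,b)$ and yields exactly two alternatives: either $b\leq z$ with $a>1$, or $b\geq z+1$ with $a=1$.

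In the first alternative the conclusion of the lemma is already the stated one. The only thing left is to upgrade the second alternative from $b\geq z+1$ to $b>z'$. Here I would appeal to the standing convention set up in the paragraph preceding the lemma: the pair $((1,z),(m,z'))$ has been chosen so that $|z'-z|$ is minimal among all pairs of landmarks of $L$ on the top and bottom rows, and consequently no vertex of the form $(1,z'')$ with $z<z''<z'$ lies in $L$, and (since $z\neq z'$) also $(1,z')\notin L$. Combining these exclusions, any element of $L$ lying in the top row must have column index either $\leq z$ or $\geq z'+1$. In the second alternative $a=1$ and $b\geq z+1$, so necessarily $b\geq z'+1$, i.e., $b>z'$.

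There is no real obstacle in this argument; it is a direct application of Lemma \ref{ab} to a carefully chosen diagonal pair together with the minimality convention on $|z'-z|$. The only thing to be careful about is that the pair $(v_1,v_2)$ actually exists as grid vertices, which is immediate from $z+1\leq z'\leq n$ and $m\geq 2$, and that the two computations ruling out $(1,z)$ and $(m,z')$ as separators use the strict inequality $z<z'$.
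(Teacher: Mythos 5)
Your proposal is correct and follows essentially the same route as the paper: force $L$ to separate the pair $(1,z+1)$, $(2,z)$, apply Lemma \ref{ab} to the separating vertex, and use the minimality convention on $|z'-z|$ (no top-row landmark with column in $(z,z']$) to upgrade $b\geq z+1$ to $b>z'$. The only cosmetic difference is that you verify by direct distance computation that $(1,z)$ and $(m,z')$ fail to separate the pair, whereas the paper obtains this as an immediate consequence of Lemma \ref{ab} itself; both are valid.
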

\begin{proof}
Consider a vertex $(a,b)$ that separates $(1,z+1)$ and $(2,z)$. By
Lemma \ref{ab}, none of $(1,z)$ and $(m,z')$ separates $(1,z+1)$ and $(2,z)$. Moreover, as by the choice of $(1,z)$ and $(m,z')$, no
vertex of the form $(1,z'')$ for $z < z'' \leq z'$ is in $L$, if $b \geq z+1$ holds, then the stronger condition $b
\geq z'+1$ holds as well.
\end{proof}

\begin{lemma}
Any set of the form $\{(1,z),(m,z'),(1,z'')\}$ with $1 \leq z < z' <z'' \leq n$ is a landmark set.
\end{lemma}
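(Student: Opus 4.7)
The plan is to apply Lemma \ref{onlydiag} with $t_1=(1,z)$ and $t_2=(m,z')$ (which is valid since $z<z'$ and both are in the required sides), so that it suffices to verify that the set $X=\{(1,z),(m,z'),(1,z'')\}$ separates every pair of vertices on a joint diagonal of the sub-grid of $t_1$ and $t_2$. That is, I need to handle pairs $v_1=(x_1,y_1)$, $v_2=(x_2,y_2)$ with $x_1<x_2$, $x_1+y_1=x_2+y_2$, and $z\leq y_2<y_1\leq z'$.

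First I would observe why $(1,z)$ and $(m,z')$ alone fail on such pairs: since $y_i\geq z$, we have $d(v_i,(1,z))=(x_i-1)+(y_i-z)$, whose value depends only on $x_i+y_i$, so $(1,z)$ does not separate $v_1$ and $v_2$; the same cancellation happens for $(m,z')$ because $y_i\leq z'$, giving $d(v_i,(m,z'))=(m-x_i)+(z'-y_i)$.

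The entire content of the lemma is then that the third landmark $(1,z'')$ fixes exactly this situation. Since $y_i\leq z'<z''$ for both vertices of the pair, we get $d(v_i,(1,z''))=(x_i-1)+(z''-y_i)$, and therefore
\[
d(v_2,(1,z''))-d(v_1,(1,z''))=(x_2-x_1)+(y_1-y_2)=2(x_2-x_1)>0,
\]
using $x_1+y_1=x_2+y_2$ to rewrite $y_1-y_2=x_2-x_1$. Thus $(1,z'')$ separates $v_1$ and $v_2$, and Lemma \ref{onlydiag} immediately yields that $X$ is a landmark set.

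There is essentially no obstacle here: the key insight is that moving the auxiliary landmark from column $z$ to a column $z''$ strictly to the right of the whole sub-grid makes the two contributions $|x_i-1|$ and $|y_i-z''|$ both monotone in the same direction along a diagonal, so the cancellation that killed $(1,z)$ no longer occurs. The only thing worth being careful about is confirming the column inequality $y_i\leq z'<z''$, which follows directly from the hypothesis $z'<z''$ and the diagonal pair being inside the sub-grid.
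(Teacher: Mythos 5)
Your proposal is correct and follows essentially the same route as the paper: invoke Lemma \ref{onlydiag} for the pair $(1,z)$, $(m,z')$ and then check that $(1,z'')$ separates the joint-diagonal pairs of the sub-grid via the computation $d(v_2,(1,z''))-d(v_1,(1,z''))=2(x_2-x_1)\neq 0$. The extra remark on why $(1,z)$ and $(m,z')$ fail on such pairs is not needed but is consistent with the paper's reasoning.
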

\begin{proof}
By Lemma \ref{onlydiag}, it is sufficient to prove that $(1,z'')$
separates any pair of vertices on a joint diagonal, and they are vertices of the sub-grid
of $(1,z)$ and $(m,z')$. Let $v_1=(x_1,y_1)$ and $v_2=(x_2,y_2)$
be such that $z \leq y_2<y_1 \leq z'$, and $x_1+y_1=x_2+y_2$ (so
$x_1 <x_2$). We have $d(v_i,(1,z''))=x_i-1+z''-y_i$. The two
distances are distinct as
$(x_2-1+z''-y_2)-(x_1-1+z''-y_1)=x_2-x_1+y_1-y_2=(x_1+y_1-y_2)-x_1+y_1-y_2=2(y_1-y_2)\neq
0$.
\end{proof}

By the last lemma and rotating the grid, any set of the form $\{(1,z),(m,z'),(m,z'')\}$ with $1 \leq z'' < z < z' \leq n$ is a landmark set as well.


\begin{lemma}\label{threetogether}
A minimal landmark set $L$ does not contain three vertices of one row or of one column.
\end{lemma}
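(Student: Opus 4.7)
The plan is to prove the contrapositive: if $L$ contains three vertices in a common row, then $L$ cannot be minimal, because deleting the middle one still leaves a landmark set. The case of three vertices in a common column follows by the same argument after rotating the grid by a quarter-turn, so I focus on the row case.

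Suppose $(i,j_1),(i,j_2),(i,j_3)\in L$ with $j_1<j_2<j_3$, and fix any pair of distinct vertices $u=(x_1,y_1)$ and $v=(x_2,y_2)$. The vertex $(i,j)$ fails to separate $u$ and $v$ precisely when
$$g(j)\;:=\;|y_1-j|-|y_2-j|\;=\;|x_2-i|-|x_1-i|\;=:\;C,$$
where the constant $C$ depends on the pair and on $i$, but not on $j$. So the entire problem reduces to comparing the values $g(j_1),g(j_2),g(j_3)$ against $C$.

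The key observation is that $g$ is a monotone piecewise-linear function of $j$: on $(-\infty,\min\{y_1,y_2\}]$ it takes the constant value $y_1-y_2$, on $[\min\{y_1,y_2\},\max\{y_1,y_2\}]$ it is linear with slope $\pm 2$, and on $[\max\{y_1,y_2\},\infty)$ it takes the constant value $y_2-y_1$. Consequently, if $g(j_1)=g(j_3)=C$, monotonicity forces $g$ to be constant on the whole interval $[j_1,j_3]$, so $g(j_2)=C$ as well. Equivalently: if $(i,j_2)$ separates $u$ and $v$, then at least one of $(i,j_1)$ and $(i,j_3)$ does so too.

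To conclude, take any pair $u\neq v$. Since $L$ is a landmark set, some vertex of $L$ separates them; if that vertex is not $(i,j_2)$ it lies in $L\setminus\{(i,j_2)\}$, and if it is $(i,j_2)$ then by the monotonicity observation at least one of $(i,j_1),(i,j_3)\in L\setminus\{(i,j_2)\}$ also separates the pair. Hence $L\setminus\{(i,j_2)\}$ is still a landmark set, contradicting the minimality of $L$. I do not anticipate any real obstacle: the whole argument is a single monotonicity computation, and the only bookkeeping is the three-region check for $g$, which is routine.
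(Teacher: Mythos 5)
Your proof is correct and follows essentially the same route as the paper's: delete the middle of the three collinear landmarks and show that any pair it separates is already separated by one of the two outer ones. The only difference is presentational --- where you invoke the monotonicity of the piecewise-linear function $g(j)=|y_1-j|-|y_2-j|$ to conclude that equal values at $j_1$ and $j_3$ force the same value at $j_2$, the paper verifies exactly this fact by an explicit case analysis on the positions of the pair's coordinates relative to $a_1$, $a_2$, $a_3$; your packaging is cleaner but the underlying computation is identical.
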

\begin{proof}
We prove the claim for a column, the proof for a row is analogous. Consider three vertices $(a_1,b)$, $(a_2,b)$, and $(a_3,b)$, where $a_1<a_2<a_3$ and $1\leq b\leq n$. We show that every pair of vertices separated by $(a_2,b)$ is separated by at least one of the other two vertices. Assume that there exists a pair of vertices $v_1=(x_1,y_1)$ and $v_2=(x_2,y_2)$, where $v_1\neq v_2$ and $x_1\leq x_2$, that are not separated by $(a_1,b)$ or $(a_3,b)$. Thus, $|x_1-a_1|+|y_1-b|=|x_2-a_1|+|y_2-b|$ and $|x_1-a_3|+|y_1-b|=|x_2-a_3|+|y_2-b|$. Taking the difference between the inequalities, we get $|x_1-a_1|-|x_1-a_3|=|x_2-a_1|-|x_2-a_3|$.

If $x_1 \leq a_1 < x_2 < a_3$, we have $|x_1-a_1|-|x_1-a_3|-|x_2-a_1|+|x_2-a_3|=a_1-x_1-a_3+x_1-x_2+a_1+a_3-x_2=2(a_1-x_2)<0$, a contradiction.
If $x_1 \leq a_1$ and $a_3 \leq x_2$, we have $|x_1-a_1|-|x_1-a_3|-|x_2-a_1|+|x_2-a_3|=a_1-x_1-a_3+x_1-x_2+a_1+x_2-a_3=2(a_1-a_3)<0$, a contradiction as well. Analogously, we can prove that the case $a_1 < x_1 < a_3$ and $x_2\geq a_3$ leads to a contradiction. If $a_1<x_1<x_2<a_3$, then
$|x_1-a_1|-|x_1-a_3|-|x_2-a_1|+|x_2-a_3|=x_1-a_1-a_3+x_1-x_2+a_1+a_3-x_2=2(x_1-x_2)<0$, a contradiction.

Thus, one of $x_1,x_2\leq a_1$, $a_1<x_1=x_2<x_3$, or $x_1,x_2 \geq a_3$ holds.
We show that $d(v_1,(a_2,b))=d(v_2,(a_2,b))$, that is, $|x_1-a_2|+|y_1-b|=|x_2-a_2|+|y_2-b|$. To show this, it is sufficient to show that $|x_1-a_1|-|x_1-a_2|=|x_2-a_1|-|x_2-a_2|$ holds. The equality obviously holds if $x_1=x_2$.
If $x_1,x_2\leq a_1$, then $|x_1-a_1|-|x_1-a_2|-|x_2-a_1|+|x_2-a_2|=a_1-x_1-a_2+x_1-a_1+x_2+a_2-x_2=0$, and if $x_1,x_2 \geq a_3$, then $|x_1-a_1|-|x_1-a_2|-|x_2-a_1|+|x_2-a_2|=x_1-a_1-x_1+a_2-x_2+a_1+x_2-a_2=0$.
\end{proof}

\begin{lemma}\label{zzprime}
Consider a set of the form $Y=\{(1,z),(m,z),(a,b)\}$, such that $1 \leq z \leq n$. This set is a minimal landmark set if and only if $b \neq z$ and $z\neq 1,n$.
\end{lemma}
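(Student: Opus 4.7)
The plan is to first characterize exactly which pairs of vertices are separated by $\{(1,z),(m,z)\}$ alone, and then use the third vertex $(a,b)$ to patch the remaining ``bad'' pairs. The key computation is that for any vertex $(x,y)$,
\[
d((x,y),(1,z))=x-1+|y-z|,\qquad d((x,y),(m,z))=m-x+|y-z|.
\]
The sum of these two expressions reveals $|y-z|$ and the difference reveals $x$. Hence two distinct vertices $v_1=(x_1,y_1)$, $v_2=(x_2,y_2)$ are unseparated by both $(1,z)$ and $(m,z)$ if and only if $x_1=x_2$ and $\{y_1,y_2\}=\{z-d,z+d\}$ for some integer $d\geq 1$. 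Such a ``bad'' pair exists iff $2\leq z\leq n-1$.

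For the \emph{if} direction, assume $b\neq z$ and $z\notin\{1,n\}$. For any bad pair I would compute
\[
d((x,z+d),(a,b))-d((x,z-d),(a,b))=|z+d-b|-|z-d-b|,
\]
and a short case split on the location of $b$ relative to $z\pm d$ shows this quantity vanishes precisely when $b=z$. Since $b\neq z$ and $d\geq 1$, $(a,b)$ separates every bad pair, so $Y$ is a landmark set. For minimality, I would verify that removing any single vertex breaks the landmark property. Removing $(a,b)$ leaves $\{(1,z),(m,z)\}$, which under $z\notin\{1,n\}$ is not a pair of adjacent corners, hence by the earlier two-vertex proposition it is not a landmark set; concretely, the bad pair $(1,z-1),(1,z+1)$ is unseparated. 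Removing $(1,z)$ or $(m,z)$ leaves a two-element set whose remaining ``row'' vertex is not a corner under $z\notin\{1,n\}$, so by the same proposition these two-element sets are not landmark sets either.

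For the \emph{only if} direction I would argue by contrapositive. If $z\in\{1,n\}$, then $(1,z)$ and $(m,z)$ are adjacent corners and by the earlier proposition already form a landmark set, so $Y$ properly contains a landmark set and fails to be minimal. If instead $b=z$, then since $|Y|=3$ we must have $1<a<m$, and the three vertices $(1,z),(a,z),(m,z)$ all lie in column $z$, so Lemma \ref{threetogether} rules out minimality.

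I do not expect a serious obstacle; the argument is largely elementary once the sum/difference observation isolates $x$ and $|y-z|$. The only point requiring a little care is the minimality step, where one must repeatedly invoke the earlier two-vertex proposition and use the hypothesis $z\notin\{1,n\}$ to ensure that neither $(1,z)$ nor $(m,z)$ is a corner, so that no two-element subset of $Y$ is a landmark set.
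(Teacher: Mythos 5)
Your proposal is correct and follows essentially the same route as the paper's proof: the same sum/difference computation on $d(\cdot,(1,z))$ and $d(\cdot,(m,z))$ isolating $x$ and $|y-z|$, the same case analysis on the position of $b$ relative to $z\pm d$ to show the third vertex separates every bad pair, and the same appeal to the two-vertex proposition to establish minimality. The only cosmetic difference is that for the case $b=z$ you invoke Lemma \ref{threetogether}, whereas the paper directly exhibits the unseparated pair $(1,z-1)$, $(1,z+1)$; both are valid.
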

\begin{proof}
If $z=1$ or $z=n$, then $\{(1,z),(m,z)\}$ is a landmark set, and therefore $Y$ is not a minimal landmark set. Otherwise, assume $b=z$.
Consider the vertices $(1,z-1)$ and $(1,z+1)$. For any $1 \leq r
\leq n$, We have $d((1,z-1),(r,z))=r$ and $d((1,z+1),(r,z))=r$.
Thus, if $b=z$, these two vertices do not have a separating vertex
in the set $Y$.

We show that in the remaining cases $Y$ is indeed a minimal landmark set. We assume $b\neq z$ and $1<z<n$.
Since none of $(1,z)$ and $(m,z)$ is a corner, no proper subset of $Y$ is a landmark set of cardinality $2$. It is left to show that any pair of vertices is separated by a vertex of $Y$. Consider two vertices $v_1=(x_1,y_1)$ and $v_2=(x_2,y_2)$, where $y_1 \leq y_2$. Assume that $d(v_1,(1,z))=d(v_2,(1,z))$ and $d(v_1,(m,z))=d(v_2,(m,z))$ hold. We will show that $d(v_1,(a,b)) \neq d(v_2,(a,b))$.

We find $x_1-1+|y_1-z|=x_2-1+|y_2-z|$ and $m-x_1+|y_1-z|=m-x_2+|y_2-z|$. Taking the difference between the last two equalities we get $x_1=x_2$. Moreover, if $y_1,y_2 \leq z$ or $y_1, y_2 \geq z$, we also get $y_1=y_2$. Thus, assume $y_1<z<y_2$. We get $y_2-z=z-y_1$, or equivalently, $y_1+y_2=2z$. Without loss of generality assume $b>z$ (the case $b<z$ is analogous). We have $d(v_1,(a,b))=|x_1-a|+b-y_1$ and $d(v_2,(a,b))=|x_2-a|+|b-y_2|$. Since $x_1=x_2$, it is sufficient to prove $|b-y_2|<b-y_1$. If $b\geq y_2$, we have $|b-y_2|-(b-y_1)=y_1-y_2<0$. Otherwise, $|b-y_2|-(b-y_1)=y_1+y_2-2b=2(z-b)<0$.
\end{proof}

As mentioned above, the following was proved in \cite{ACM14}.

\begin{proposition}
Any minimal landmark set consisting of exactly three vertices has
one of the following forms (up to rotating the grid or mirroring
the grid).
\begin{itemize}
\item $L=\{(1,z),(m,z'),(1,z_1)\}$, where $1 < z<z'<z_1\leq m$ or
$1\leq z<z'<z_1<m$ (that is, at most one of $(1,z)$ and $(m,z')$ is a corner).
\item $L=\{(1,z),(m,z'),(m,z_2)\}$, where $1\leq z_2 <z<z'\leq m$ or
$1\leq z<z'<z''<m$.
\item $L=\{(1,z),(m,z),(y,z_3)\}$, where
$1<z<m$ and $z_3 \neq z$.
\end{itemize}
\end{proposition}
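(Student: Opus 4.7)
The plan is to reduce to a canonical configuration by symmetry, then pin down the third vertex of $L$ using two carefully chosen test pairs on joint diagonals. By Lemma \ref{opSides}, $L$ contains a pair on opposite sides; after rotating the grid I may assume these sides are the top and bottom rows. Among pairs in $L$ of the form $(1,z),(m,z')$, I pick one minimizing $|z'-z|$, and by mirroring the grid I may further assume $z\leq z'$. Let $(a,b)$ denote the third vertex of $L$.

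First I dispose of the degenerate case $z=z'$. If $z\in\{1,n\}$ then $(1,z),(m,z)$ are adjacent corners and already form a landmark set of size~$2$, contradicting minimality; hence $1<z<n$. Lemma \ref{zzprime} then forces $b\neq z$, and $L$ takes form~3 with $y=a$ and $z_3=b$.

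For $z<z'$, Lemma \ref{opC} rules out opposite corners in $L$, and Lemma \ref{lemab} guarantees a vertex in $L$ satisfying either (i)~$a=1, b>z'$ or (ii)~$a>1, b\leq z$; since neither $(1,z)$ nor $(m,z')$ qualifies, this vertex must be $(a,b)$. To separate the admissible sub-cases from stray possibilities, I would next apply the bottom-right analogue of Lemma \ref{ab} to the pair $(m-1,z'),(m,z'-1)$, which sits on a joint diagonal: a direct distance check shows that neither known landmark separates them (distances are $m+z'-z-2$ and $1$ respectively), so $(a,b)$ must. The analogous computation then forces either $a<m, b\geq z'$ or $a=m, b\leq z'-1$. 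Intersecting with (i) and (ii) leaves exactly $a=1,b>z'$, which gives form~1 with $z_1=b$, or $a=m,b\leq z$; in the second branch the minimality of $|z'-z|$ excludes $b=z$, since otherwise $(1,z),(m,z)$ would be an opposite-side pair with strictly smaller column gap, contradicting the choice. Hence $b<z$ and $L$ has form~2 with $z_2=b$.

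The routine sub-calculations are entirely in the spirit of Lemma \ref{ab} and should pose no trouble. The main obstacle is recognising which second test pair to introduce: the pair $(m-1,z'),(m,z'-1)$ is the $180^\circ$ image of $(1,z+1),(2,z)$ and is precisely what is needed to prevent the third vertex from lying strictly in the interior of the grid. Coupling this with the $|z'-z|$-minimality to eliminate the boundary coincidence $b=z$ is what distinguishes a genuine minimal landmark set of size~$3$ from a merely separating triple.
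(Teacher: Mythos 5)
Your proposal is correct and follows essentially the same route as the paper: canonicalize via Lemma \ref{opSides} and the $|z'-z|$-minimal choice of the opposite-side pair, dispose of $z=z'$ with Lemma \ref{zzprime}, and for $z<z'$ combine Lemma \ref{lemab} with the $180^\circ$-rotated version of Lemma \ref{ab} (the paper invokes exactly this rotation; you merely make the test pair $(m-1,z'),(m,z'-1)$ and the intersection of the two constraints explicit). The one detail you omit is ruling out the boundary coincidence in form~1 where both $z=1$ and $z_1=n$ hold (then $\{(1,z),(1,z_1)\}$ is already a landmark set of two adjacent corners, contradicting minimality), which is needed to justify the stated range conditions $1<z<z'<z_1\leq n$ or $1\leq z<z'<z_1<n$; the symmetric check applies to form~2.
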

\begin{proof}
Consider a minimal landmark set $L$ of cardinality $3$. By our assumption, and landmark set contains $(1,z)$ and $(m,z')$, and we analyze the options for the third vertex of the landmark set. If $z=z'$, by Lemma \ref{zzprime} the third vertex can be any vertex whose second component is not $z$, and the landmark set is of the third type.

Otherwise, by Lemma \ref{lemab}, $L$ contains a vertex $(a,b)$ where either $a=1$ and $b > z'$ hold or $a>1$ and $b \leq z$ hold. Since $|L|=3$, this is the third vertex of $L$.

In the first case let $z_1=b$. It does not hold that both $z=1$ and $z_1=m$, as in this case $\{(1,z),(1,z_1)\}$ is a landmark set, so the set $L$ would not be minimal. The resulting form of $L$ is of the first kind.

Otherwise, let $b \leq z$. Since the landmark set only has one additional vertex (except for $(1,z)$ and $(m,z')$), by applying the same property of Lemma \ref{ab} and rotating the grid, we find that if $b\leq z'$, then $a=m$, and the structure of the landmark set is of the second kind (and we let $y=a$ and $z_3=b$).
\end{proof}

In the case of minimal landmark sets with at least four vertices, we will assume $z<z'$ due to the following. Consider a minimal landmark set (which has the elements $(1,z)$ and $(m,z')$). If $z=z'$, by Lemma \ref{threetogether}, the landmark set has no other vertex of the same column. Assume that it has at least two additional vertices. Then, by Lemma \ref{zzprime}, one vertex can be removed, such that the remaining set is a landmark set.

In order to define an algorithm for finding a minimum weight
landmark set among minimal landmark sets of cardinality at least
$4$, we define a concept called {\it zigzag} sequence. This is a
sequence of an even number (at least four) of vertices,
$q_1,q_2,\ldots,q_{2k}$ for $k \geq 2$, where $q_i=(s_i,d_i)$
satisfies the following properties. First, $s_1=1$ and $s_{2k}=m$
(and it will follow from the definition that $d_1<d_{2k}$). For
even values of $i$ ($i=2,4,\ldots,2k$), $d_i=d_{i-1}$ and
$s_i>s_{i-1}$ hold, and for odd values of $i$
($i=3,5,\ldots,2k-1$), $s_i=s_{i-1}$ and $d_i>d_{i-1}$ hold. That
is, a zigzag sequence starts in the first row, in even steps the
next vertex is below the previous vertex (in the same column), and
in odd steps, the next vertex is to the right of the previous
vertex (and in the same row). The last vertex is in the last row.

Given a zigzag sequence $q_1,q_2,\ldots,q_{2k}$, we say that a
sequence $t_1,t_2,\ldots,t_{2k}$ where $t_i=(b_i,c_i)$ corresponds
to this zigzag sequence (or it is a corresponding sequence) if
$t_1=q_1$, for even values of $i$, $b_i=s_i$ and $r_i \leq d_i$,
and for odd values of $i$ ($i>1$), $r_i = d_i$ and $b_i  \leq
s_i$. Additionally, $r_{2k}>c_1$. That is, the first vertex is the
same in both sequences. In even steps, the vertex of the
corresponding sequence is to the left of the vertex of the zigzag
sequence (in the same row, and they can possibly be equal), and in
odd steps, the vertex of the corresponding sequence is above the
vertex of the zigzag sequence (in the same column, and they can
possibly be equal). The last vertex is in the last row, and its
column must be larger of that of the first vertex.

A sequence that
corresponds to a zigzag sequence $S$ is called a {\it perfect} sequence (for this zigzag sequence) if it
is the minimum cost sequence that corresponds to $S$. Since the
condition for every $i$ such that $1 \leq i \leq 2k$ where $2k$ is
the length of $S$ (the condition on which vertex can be the $i$th vertex of the corresponding sequence)
is independent of other values of $i$. To obtain
a perfect sequence that corresponds to $S$, it is required
to select for each $i$ a vertex of minimum cost that satisfies the
condition of a corresponding sequence. That is, for $i=1$ there is a unique vertex that can be the first vertex of the corresponding sequence, for an even step, a minimum cost vertex whose row is the same as the $i$th vertex of the zigzag sequence and its column is no larger than the column of the $i$th vertex of the zigzag sequence, for an odd step, a minimum cost vertex whose column is the same as the $i$th vertex of the zigzag sequence and its row is no larger than the row of the $i$th vertex of the zigzag sequence, and if $i=2k$, the last vertex of the corresponding sequence has also a restriction on its column, that it is larger than the column of the first vertex of the zigzag sequence (and the corresponding sequence). Note that the conditions on the vertices of the corresponding sequence are independent of each other, and each of the $2k$ vertices has a separate condition.

We note that while a sequence corresponding to a zigzag sequence
defines its zigzag sequence in a unique way given a specific
orientation of the grid, if we rotate the grid (by 180 degrees),
and use the same sequence, the zigzag sequence may be different.

The following theorem connects zigzag sequence and
landmark sets. Figure \ref{zigzag} illustrates this idea.
\begin{figure} [h!]
\hspace{0.8in}
\includegraphics[angle=0,width=0.7\textwidth]{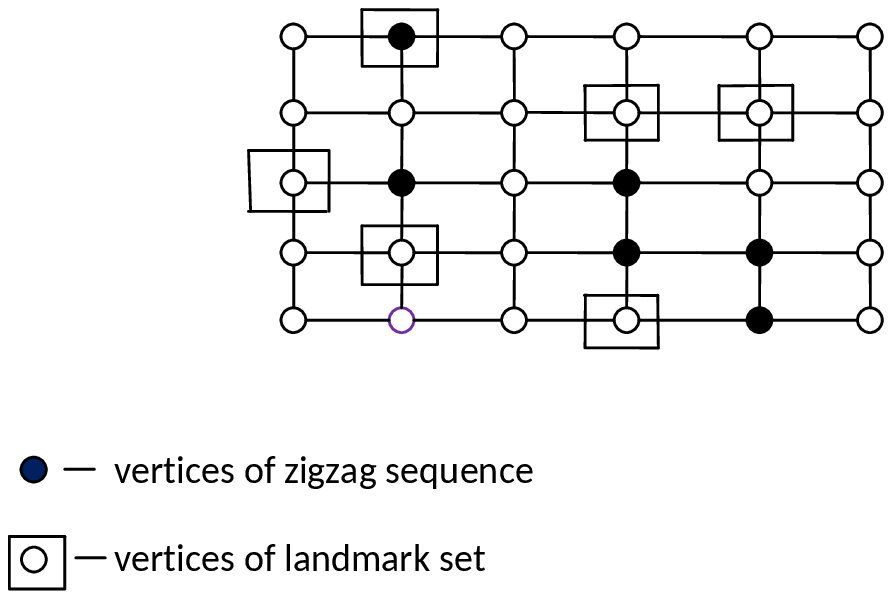}
\caption{A zigzag sequence and a corresponding sequence (which is a landmark set). Notice that the vertex on row $1$ is a common vertex of the two sequences. \label{zigzag}} 
\end{figure}

\begin{theorem}
Every sequence that corresponds to some zigzag sequence is a
landmark set. The vertices of every landmark set that has
cardinality at least $4$, and it is both a minimal landmark set
and  a minimum landmark set, can be ordered to form a perfect
sequence for some zigzag sequence.
\end{theorem}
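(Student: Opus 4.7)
The plan is to prove the two implications separately.

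For the forward direction, fix a zigzag $q_1,\ldots,q_{2k}$ with corners $q_i=(s_i,d_i)$ and a corresponding sequence $T=(t_1,\ldots,t_{2k})$. Since $t_1=q_1=(1,d_1)$ lies on the top row and $t_{2k}$ lies on the bottom row with column $c_{2k}>d_1$, Lemma~\ref{onlydiag} reduces the task to showing that every pair $v_1=(x_1,y_1), v_2=(x_2,y_2)$ on a joint diagonal of the sub-grid of $t_1$ and $t_{2k}$, with $x_1<x_2$ and $x_1+y_1=x_2+y_2=:C$, is separated by some $t_i$. The anti-diagonal sums $C_j:=s_j+d_j$ at the zigzag corners are strictly increasing in $j$, so $C$ falls in a unique interval $[C_j,C_{j+1}]$, and the anti-diagonal $x+y=C$ meets the staircase on the segment from $q_j$ to $q_{j+1}$. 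I would choose the separating $t_i$ based on which segment: for even $i$ (where $t_i$ has fixed row $s_i$ and column $c_i\leq d_{i-1}$) use the ``left'' option of Lemma~\ref{cutting}, and for odd $i>1$ (where $t_i$ has fixed column $d_i$ and row $b_i\leq s_{i-1}$) use the ``above'' option. The sub-grid parameters in Lemma~\ref{cutting} are chosen to include the pair $v_1,v_2$ while placing $t_i$ on the appropriate side; verification reduces to inequalities of the form $s_{i-1}+d_{i-1}\leq C\leq s_i+d_i$. When $v_1,v_2$ lie entirely on one side of the staircase, the pair still straddles some $t_i$'s row or column (since it straddles a turning level of the zigzag), and the argument adapts by using the neighboring $t_{i\pm 1}$.

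For the reverse direction, let $L$ be a minimal and minimum landmark set with $|L|\geq 4$. By the preceding conventions, $L$ contains $(1,z)$ and $(m,z')$ with $z<z'$, and $|z'-z|$ is minimum among opposite-side pairs in $L$. I would recover the zigzag inductively: set $t_1=(1,z)$, $s_1=1$, $d_1=z$, and apply Lemma~\ref{lemab} to obtain $t_2=(s_2,c_2)\in L$ with $s_2>1$, $c_2\leq z$ (using a symmetry of the grid, if necessary, to reduce the alternative region of Lemma~\ref{lemab} to this case). For $i\geq 2$, given the partial sequence $t_1,\ldots,t_i$, consider a pair of adjacent vertices on a joint diagonal just ``past'' the current corner of the developing staircase, e.g., $(s_i,d_{i-1}+1)$ and $(s_i+1,d_{i-1})$ when $i$ is even. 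A coordinate-shifted version of Lemma~\ref{ab} restricts any separator of this pair to one of two regions: one extends the zigzag correctly and yields $t_{i+1}$; the other would place a third landmark of $L$ in a row or column already containing two landmarks from previous steps, contradicting Lemma~\ref{threetogether}. Iteration terminates at $t_{2k}=(m,z')$ because the row values $s_i$ strictly grow at even turns and the column values $d_i$ strictly grow at odd turns, while staying within $[1,m]\times[1,n]$. Perfectness follows by replacement: if some $t_i$ had cost greater than the minimum-cost vertex satisfying the $i$-th corresponding-sequence constraint, replacing $t_i$ by such a cheaper vertex would, by the independence of the $2k$ constraints and by the forward direction, produce a strictly cheaper landmark set, contradicting that $L$ is minimum.

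The main obstacle I expect is in the reverse direction: making the inductive extraction of the sequence unambiguous, specifically showing at each step that the next landmark of $L$ must fall in the region predicted by the zigzag pattern rather than an alternative region. This needs careful tracking of which rows and columns already contain landmarks so that Lemma~\ref{threetogether} can rule out off-pattern extensions, together with a verification that the process uses exactly the vertices of $L$ (no landmark of $L$ is missed and no extra vertex is introduced). The forward direction should be more mechanical: the case analysis falls out of Lemma~\ref{cutting} once we fix which segment of the staircase the anti-diagonal $x+y=C$ crosses.
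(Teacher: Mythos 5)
Your forward direction is essentially the paper's argument: reduce via Lemma \ref{onlydiag} to joint-diagonal pairs inside the sub-grid of $t_1$ and $t_{2k}$, then apply Lemma \ref{cutting} once per step of the staircase. The paper organizes this as an induction (after $t_1,\ldots,t_i$, only pairs lying in the sub-grid of $q_i$ and $(m,n)$ may remain unseparated), which is the cleaner form of your ``locate the diagonal between $C_j$ and $C_{j+1}$'' case split; the caveat you raise about pairs lying off the staircase is exactly what the inductive formulation absorbs automatically. No substantive issue there.

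The reverse direction is where the gap is. Your mechanism for forcing the next landmark into the on-pattern quadrant is Lemma \ref{threetogether}, and that cannot work: at an odd step the off-pattern region for a separator of $(s_i+1,d_i)$ and $(s_i,d_i+1)$ is the entire quadrant $\{(x,y)\,:\, x>s_i,\ y\leq d_i\}$, and a landmark there need not share a row or column with any previously extracted landmark, let alone with two of them. The paper instead makes the extraction greedy with a tie-breaking rule --- at an even step pick the candidate with maximum first coordinate, at an odd step the one with maximum second coordinate --- and carries as an induction invariant that after an even (resp.\ odd) step the set contains no vertex with $x>s_i,\ y\leq d_i$ (resp.\ $x\leq s_i,\ y>d_i$); this invariant is what empties the off-pattern quadrant at the next step. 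Without some such device your induction does not close. Two further points you leave unresolved also require the paper's structure: (i) the second alternative of Lemma \ref{lemab} (a vertex $(1,b)$ with $b>z'$) cannot be removed ``by symmetry,'' since the orientation is already pinned down by choosing $(1,z),(m,z')$ to minimize $|z'-z|$; the correct argument is that such a vertex would make $\{(1,z),(m,z'),(1,b)\}$ a landmark set, contradicting minimality of a set of cardinality at least $4$ (the same minimality argument is what forces the staircase to terminate exactly at $(m,z')$ rather than at some $(m,c)$ with $c<z$, a point your termination claim does not address). (ii) Your worry that ``no landmark of $L$ is missed'' is settled not by bookkeeping but by the first half of the theorem: the extracted subsequence already corresponds to a zigzag sequence, hence is a landmark set, hence equals $L$ by minimality. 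Your replacement argument for perfectness matches the paper's.
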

\begin{proof}
We start with the first property, that is, we show that a sequence
corresponding to a zigzag sequence is a landmark set. A sequence
$t_1,\ldots,t_{2k}$ that corresponds to a zigzag sequence
$q_1,q_2,\ldots,q_{2k}$ (where $t_i=(b_i,c_i)$ and
$q_i=(s_i,d_i)$) satisfies $t_1=q_1=(1,d_1)$, $r_{2k}>c_1=d_1$,
and $b_{2k}=s_{2k}=m$. Thus, by Lemma \ref{onlydiag}, it is
sufficient to consider a pair of vertices of the sub-grid of $t_1$
and $t_{2k}$ that are on a joint diagonal. Since this sub-grid is
contained in the sub-grid of $q_1$ and $(m,n)$, we will prove the
condition on separation of pairs of vertices of a sub-grid that
are joint diagonals for the sub-grid of $q_1=t_1$ and $(m,n)$. We
will show the following by induction. The vertices of the prefix
$t_1,\ldots,t_i$ separate any pair of vertices of the sub-grid of
$q_1$ and $(m,n)$ that are on a joint diagonal, possibly excluding
pairs of vertices of the sub-grid of the vertices $q_i$ and
$(m,n)$ that are on a joint diagonal. Since the sub-grid of
$q_{2k}=(m,d_{2k})$ and $(m,n)$ has no such pairs (every diagonal
of has at most one vertex of this sub-grid), the claim will
follow. The base of the induction is trivial (as the claim is
empty for this case). Assume that the requirements (which we are
proving by induction) hold for a given value $i$, where $i<2k$.
The vertex $t_{i+1}$ satisfies the conditions of Lemma
\ref{cutting} with respect to the sub-grid of $q_i$ and $(m,n)$.
Thus, the induction step follows from Lemma \ref{cutting}.

Next, we consider the second property. We will prove that any
minimal landmark set contains a subset, such that this subset can
be sorted into a sequence corresponding to some zigzag sequence.
Since the set of elements of a sequence corresponding to a zigzag
sequence was proved to be a landmark set, this shows that the
selected subset cannot be a proper subset of the landmark set (as we are
already considering a minimal landmark set). Thus, this will prove that
any minimal landmark set can be sorted to form a sequence that
corresponds to a zigzag sequence. It is also required to show that if
the landmark set is not only a minimal landmark set but it is also
a minimum landmark set, then its sorted sequence is a perfect
sequence that corresponds to the zigzag sequence. Consider a
minimal landmark set that its sorted sequence is not a perfect
sequence for the zigzag sequence. As it is not a perfect sequence
for the zigzag sequence, at least one landmark can be replaced
such that the resulting sequence still corresponds to the same
zigzag sequence but it has a smaller cost. This is possible as the vertices of the corresponding sequence can be selected independently of each other.
Since a landmark set of a smaller cost exists, we find that the considered
landmark set is not a minimum landmark set. Therefore, to complete
the proof of the second property, it remains to show how a subset
of any minimal landmark set of cardinality at least $4$ can be
selected and ordered such that a zigzag sequence can be defined
for it (where the subset of the minimal landmark set will
correspond to this zigzag sequence). This will hold in particular for a minimal landmark set that is also minimum, in which case the corresponding sequence will be perfect. We will
use the notation as in the definition of a zigzag sequence and a
corresponding sequence.

Recall that we assume that $(1,z)$ and $(m,z')$ belong to the
landmark set, such that $z<z'$, and no vertex $(1,\tilde{z})$ with
$z<\tilde{z}\leq z'$ belongs to the set, and no vertex
$(m,\hat{z})$ with $z \leq \hat{z} < z'$ belongs to the set. We
define the zigzag sequence and its corresponding sequence
inductively, such that the last vertex of the corresponding
sequence is $(m,z')$ (that is, the selection process of vertices
from the landmark set ends when this vertex is selected).

Let $q_1=t_1=(1,z)$, that is, the first vertex of both sequences is fixed.
In an odd step of index $i \geq 3$, given $q_{i-1}=(s_{i-1},d_{i-1})$, we will select a vertex of the landmark set that was not selected yet to be $t_i$, such that the the first component of the vertex is no larger than $s_{i-1}$, and its second component is above $d_{i-1}$. In an odd step of index $i$, given $q_{i-1}=(s_{i-1},d_{i-1})$, we will select a vertex of the landmark set that was not selected yet to be $t_i$, such that the the first component of the vertex is no larger than $s_{i-1}$, and its second component is above $d_{i-1}$. In a case of ties, we will select a vertex whose second component is maximum. That is, we select a vertex $t_i$ that is above the sub-grid of the vertices $(s_{i-1},d_{i-1})$ and $(m,n)$. In this case, $q_i=(s_{i-1},c_i)$.
In an even step of index $i \geq 2$, given $q_{i-1}=(s_{i-1},d_{i-1})$, we will select a vertex of the landmark set that was not selected yet to be $t_i$, such that the the second component of the vertex is no larger than $d_{i-1}$, and its first component is above $s_{i-1}$. In a case of ties, we will select a vertex whose first component is maximum. Moreover, if $(m,z')$ is a valid candidate, $t_i$ is defined to be $(m,z')$ (this does not contradict the tie breaking rule). If $q_i$ is defined such that $s_i=m$, the process is terminated.
That is, we select a vertex that is to the left of the sub-grid of the vertices $(s_{i-1},d_{i-1})$ and $(m,n)$. In this case, $q_i=(b_i,d_{i-1})$.
We will show by induction that this is always possible, that is, such a vertex always exist, and that the process terminates. If $t_i$ and $q_i$ are defined in every step, the process terminates, and $t_{2k}=(m,z')$, the sequences satisfy the requirements of a zigzag sequence and its corresponding sequence. We will prove a number of properties by induction, and it will follow from the proof that the sequences were defined properly.

More precisely, we prove by induction that the following properties hold after defining $t_i$ and $q_i$.
\begin{enumerate}
\item The vertices $t_i$ and $q_i$ are well-defined.
\item For any $i'$ such that $1\leq i'\leq i$, it holds that $b_{i'} \leq s_{i}$ and $c_{i'} \leq d_i$.
\item If $i$ is even, then the landmark set has no vertex $(x,y)$ such that $x>s_i$ and $y \leq d_i$.
\item If $i$ is odd, then the landmark set has no vertex $(x,y)$ such that $x \leq s_i$ and $y > d_i$.
\item If $i$ is even then either $s_i \leq m-1$ or $t_i=(m,z')$. Moreover, if $s_i \leq m-1$, then $d_i <z'$.
\item If $i$ is odd, then $s_i \leq m-1$.
\end{enumerate}

Consider the case $i=1$. The first property holds as we defined $t_1$ and $q_1$. The second property holds as the only relevant value of $i'$ is $1$, and by $t_1=q_1$. Since $i$ is odd, we prove the fourth and sixth properties. The sixth property holds as $s_1=1$.  Recall that the landmark set is minimal and its cardinality is at least $4$. To prove the fourth property, we show that the landmark set has no vertex  $(1,\bar{z})$, where $z<\bar{z}\leq n$. By the choice of
$(1,z)$ and $(m,z')$, the landmark set has no vertex
$(1,\bar{z})$, where $z<\bar{z}\leq z'$. If the landmark set has a
vertex $(1,\bar{z})$, where $z'<\bar{z}\leq n$, we can prove that the
landmark set is not minimal. If $z=1$ an $\bar{z}=n$ both hold,
then $\{(1,z),(m,z')\}$ is a set of two adjacent corners and thus
it is a landmark set. Otherwise, $\{(1,z),(m,z'),(1,\bar{z})\}$ is
a landmark set.

Next, consider an even value of $i$. By the induction hypothesis (the sixth property), since the process did not terminate, no vertex of the landmark set of the last row was selected and in particular, $s_{i-1}\leq m-1$. If $d_{i-1}=n$, then the vertex $(m,z')$ is defined to be $t_i$ as $m>s_{i-1}$ and $z' \leq n$. The vertex $q_i$ is defined as $(m,n)$. The first property is satisfied, and all remaining properties hold trivially. Otherwise, assume $d_{i-1} \leq n-1$.
By the induction hypothesis, all landmarks $t_1,t_2,\ldots,t_{i-1}$ have first components in $[1,s_{i-1}]$ and second components in $[1,d_{i-1}]$. Thus, their shortest paths to the vertices $(s_{i-1}+1,d_{i-1})$ and $(s_{i-1},d_{i-1}+1)$ traverse $(s_{i-1},d_{i-1})$, and none of $t_1,t_2,\ldots,t_{i-1}$ separates them. The landmark set has at least one vertex separating them. Moreover, any vertex $(x,y)$ where $x\geq s_{i-1}+1$ and $y\geq d_{i-1}+1$ has shortest paths to these two vertices traversing $(s_{i-1}+1,d_{i-1}+1)$.
By the induction hypothesis, the landmark set has no vertex $(x,y)$ where $x \leq s_{i-1}$ and $y \geq d_{i-1} +1$, and therefore it has at least one vertex $(x,y)$ where $x \geq s_{i-1}+1$ and $y \leq d_{i-1}$. Such a vertex is selected as $t_i$, and $q_i$ is defined such that the requirements of a zigzag sequence and its corresponding sequence are satisfied to be $q_i=(b_i,d_{i-1})$. Since $s_i>s_{i-1}$ and $d_i=d_{i-1}$, $b_{i'} \leq s_{i}$ and $c_{i'} \leq d_i$ holds for any $i'<i$ using the induction hypothesis. Moreover, $b_{i} = s_{i}$ and $c_{i} \leq d_{i-1}=d_i$ holds by definition. Thus, the second property holds. If $s_i=b_i=m$, then the third property holds trivially. Otherwise, since $t_i$ was selected to have a maximum first component, and therefore is a vertex with a first component above $s_i$ and second component of at most $d_{i-1}=d_i$ would have been chosen instead $t_i$, if it existed.
If $d_{i-1} \geq z'$, then $(m,z')$ is selected as $t_i$, since $m \geq s_{i-1}+1$ and $z' \leq d_{i-1}$.
If indeed $(m,z')$ is selected, we have $q_i=(m,d_{i-1})$. In this last case the fifth property holds. Assume that $s_i=b_i=m$ while $c_i \neq z'$. This means that $c_i<z'$ and $d_{i-1}<z'$, as otherwise $(m,z')$ could be selected. Since the landmark set has no vertex $(m,\bar{z})$ with $z\leq \bar{z} < z'$, we have $t_i=(m,c_i)$, where $c_i<z$. However, in this case $\{(1,z),(m,z'),(m,\bar{z})\}$ is
a landmark set, contradicting the assumption that the landmark set is minimal and its cardinality is at least $4$. Thus, we are left with the case $s_i=b_i\leq m-1$. If $d_{i-1} \geq z'$, then $(m,z')$ would be a candidate for selection as $t_i$, and thus $d_{i-1} <z'$.

Finally, consider an odd value of $i$. By the induction hypothesis, $s_{i-1} \leq m-1$ and $d_{i-1} \leq z' -1 \leq n-1$. The property that $t_i$ and $q_i$ are well-defined are proved analogously to the case of even $i$, as the landmark set has a vertex separating $(s_{i-1}+1,d_{i-1})$ and $(s_{i-1},d_{i-1}+1)$. The second property holds similarly to the case of even $i$, the fourth property again holds due to the selection rule of a vertex with a maximum second component, and the sixth property holds as $s_i=s_{i-1} \leq m-1$.
\end{proof}

We show how the known result for the cardinality of a minimal
landmark set \cite{ACM14}, where this cardinality is
$\min\{2n-2,2m-2\}$ for $\min\{m,n\}\geq 3$, follows from the
relation to zigzag sequences. In a zigzag sequence, any row has at
most two vertices, while the first row and the last row have one
vertex each.  By definition, a zigzag sequence has at most $2m-2$
vertices. This implies an upper bound of $2m-2$ on the cardinality
of a zigzag sequence, since the number of vertices in the zigzag
sequence and the corresponding landmark set are equal, and we can
consider cardinalities of landmark sets. If $m \leq n$, we are
done. Otherwise, note that a zigzag sequence has at most two
vertices in each column. If it has no vertices of the last column,
it has at most $2n-2$ vertices. Otherwise, by the definition of a
zigzag sequence (where is particular, it has an even number of
vertices), the sequences has exactly two vertices in each column,
including the first column and the last column. Since it has a
vertex of the last row and a vertex of the first row, it has two
opposite corners, contradicting Lemma \ref{opC}. Since zigzag
sequences have even cardinalities, we find that minimal landmark
sets also have even cardinalities, except for those that have
cardinality $3$.

The action of the algorithm starts with computing the following
values, which are prefix, suffix, and range minima of rows and
columns of the grid. For $1 \leq i \leq m$ and $1 \leq j \leq n$,
let $$Rpref_i^j=\min_{1 \leq k \leq j} c_{i,k} \mbox{ \ \ , \ \ }
Rsuff_i^j=\min_{j \leq k \leq n} c_{i,k} \ , $$
$$Cpref_j^i=\min_{1 \leq k \leq i} c_{k,j} \mbox{ \ \ , \ \ } Csuff_j^i=\min_{i
\leq k \leq n} c_{k,j} \ . $$ These are the prefix minima of rows,
suffix minima of rows, prefix minima of columns, and suffix minima
of columns, respectively. All the $Rpref_{i,j}$ values for a given
value of $i$ can be computed together in time $O(n)$ by the
following simple dynamic programming formulation:
$Rpref_i^1=c_{i,1}$, and for $j \geq 2$,
$Rpref_i^j=\min\{c_{i,j},RL_i^{j-1}\}$. Similarly, all $2m+2n$
values can be computed in time $O(m+n)$. For the side row and
columns we also compute range minima. For $i=1,m$ and any $1 \leq
j \leq \ell \leq n$, let $Rrange_i^{j,\ell}=\min_{j \leq k \leq
\ell} c_{i,k}$, and for $j=1,n$ and any $1 \leq i \leq t \leq m$,
let $Crange_j^{i,t}=\min_{i \leq k \leq t} c_{k,j}$. For a given
value of $j$, all values  $Rrange_1^{j,\ell}$ and
$Rrange_m^{j,\ell}$ can be computed in time $O(n)$, while
$Crange_1^{i,t}$ and $Crange_n^{i,t}$ can be computed in time
$O(m)$ for a given value of $i$. Thus, the time of computing all
the values $Rrange_1^{j,\ell}$, $Rrange_m^{j,\ell}$,
$Crange_1^{i,t}$, and $Crange_n^{i,t}$ is $\Theta(m^2+n^2)=O(|V|^2)$.
It is possible to keep also the identities of the vertices of
minimum costs using the same running time.

Our algorithm computes candidate landmark sets and selects a set
of minimum weight among these sets. There are four landmark sets
of cardinality $2$, and they can be enumerated in time $O(1)$.
There are two types of landmark sets of cardinality three. The
first kind is where two landmarks are on opposite sizes, sharing
the same row or column (depending on which sides these are), and a
third landmark can be any vertex not on the same two or column as
the two other landmarks. There are $O(m)$  candidate pairs on rows
and $O(n)$ pairs on columns. Using the values defined above (the
values $Rpref_i^n=Rsuff_i^1$ and $Cpref_j^m = Csuff_j^1$), we can
find a vertex of minimum cost in the grid $(x_m,y_m)$, another
vertex that has minimum cost out of vertices on other columns (not
on column $y_m$), and another vertex that has minimum cost out of
vertices on other rows (not on row $x_m$). The last two vertices
are distinct from $(x_m,y_m)$, but both of them can possibly be
the same vertex. These two or three vertices can be computed in
time $O(m+n)$. For each pair on a column, and given the (at most)
three vertices defined here, we find a minimum cost vertex that is
not on a certain column or not on a certain row in time $O(1)$ for
each candidate pair. The second kind of landmark sets with
cardinality three consists of two vertices on one side, and one
vertex on the opposite side, on a column that is strictly between
the columns of the first two vertices. For every vertex $v$ on a
side, it is possible to use the values $Rrange_1^{j,\ell}$,
$Rrange_m^{j,\ell}$,  $Crange_1^{i,t}$, and $Crange_n^{i,t}$ to
find two vertices of minimum cost on the opposite side, such that
the vertex is between them. For $v=(1,z)$, such that $1<z<n$, the
costs of the required two vertices are $Rrange_m^{1,z-1}$ and
$Rrange_m^{z+1,n}$. For $v=(m,z)$, such that $1<z<n$, the costs of
the required two vertices are $Rrange_1^{1,z-1}$ and
$Rrange_1^{z+1,n}$. For $v=(q,1)$, such that $1<q<m$, the costs of
the required two vertices are $Crange_n^{1,q-1}$ and
$Crange_n^{q+1,m}$. For $v=(q,n)$, such that $1<q<m$, the costs of
the required two vertices are $Crange_1^{1,q-1}$ and
$Crange_1^{q+1,m}$. Thus, a landmark set of minimum weight of this
form can be found in time $O(m+n)$, as the candidates for the
vertex $v$ are all side vertices (excluding corners).

Finally, in order to find a minimum weight landmark set out of
landmark sets of cardinality at least $4$, we define a dynamic
programming algorithm. We present the algorithm for the case where
the landmark set corresponds to a zigzag sequence as defined
above. In order to consider all relevant subsets, the algorithm is
applied eight times, such that the grid is rotated and mirrored in
all possible directions. We define the following functions for any
vertex $v=(1,z)$ with $1\leq z\leq n-1$. For any vertex $u=(a,b)$,
let $F_v^o(a,b)$ denote the minimum weight of an odd length prefix
of a sequence that corresponds to a prefix of a zigzag sequence
whose first vertex is $v$  and the last vertex is $u$, and let
$F_v^e(a,b)$ denote the minimum weight of an even length prefix of
a sequence that corresponds to a prefix of a zigzag sequence whose
first vertex is $v$ and the last vertex is $u$. We also let
$G_v^o(a,b)=\min_{1 \leq i \leq a} F_v^o(i,b)$ and
$G_v^e(a,b)=\min_{1 \leq j \leq b} F_v^e(a,j)$.

We have $$G_v^o(1,z)=F_v^o(1,z)=c(z) \mbox{ \ \ \ and \ \ \ }
G_v^e(1,z)=F_v^e(1,z)=\infty \ . $$ Moreover, for any $u=(1,z')$
such that $z'\neq z$,
$$G_v^e(1,z')=F_v^e(1,z')=F_v^o(1,z')=G_v^o(1,z')=\infty\ . $$

For $u=(r',z')$, where $1<r'<m$, we let
$$F_v^o(r',z')=G_v^e(r',z'-1)+Cpref_{z'}^{r'},
G_v^o(r',z')=\min\{F_v^o(r',z'),G_v^o(r'-1,z')\} \ , $$
$$F_v^e(r',z')=G_v^o(r'-1,z')+Rpref_{r'}^{z'}, \mbox{\ \ and \  \
} G_v^e(r',z')=\min\{F_v^e(r',z'),G_v^e(r',z'-1)\} \ . $$ Finally,
for $u=(m,z')$, we let  $G_v^o(m,z')=F_v^o(m,z')=\infty$,
$G_v^e(m,z')=F_v^e(m,z')=\infty$, if $z'\leq z$, and if $z'>z$, we
let $$F_v^e(m,z')=G_v^o(m-1,z')+Rrange_m^{z+1,z'} \ , $$ and
$G_v^e(m,z')=\min\{F_v^e(m,z'),G_v^e(m,z'-1)\}$. The minimum cost
of a sequence corresponding to a zigzag sequence starting at $v$
is $G_v^e(m,n)$, and the set of vertices can be found via
traceback. The running time for a fixed vertex $v$ is $O(|V|)$.

The total running time of the algorithm is $\Theta((m+n)|V|)=O(|V|^2)$.


\begin{thebibliography}{99}

\bibitem {ACM14} P.~Andersen, C.~Grigorious, and M.~Miller. \newblock Minimum weight resolving sets of grid graphs.\newblock  arXiv:1409.4510.

\bibitem {Babai}L.~Babai. \newblock On the order of uniprimitive permutation
groups. \newblock {\em Annals of Mathematics}, 113(3):553--568,
1981.

\bibitem {BE+06}Z.~Beerliova, F.~Eberhard, T.~Erlebach, A.~Hall, M.~Hoffmann,
M.~Mihal{\'a}k, and L.~S. Ram. \newblock Network discovery and
verification.
\newblock {\em IEEE Journal on Selected Areas in Communications},
24(12):2168--2181, 2006.

\bibitem {CH+07}J.~C{\'a}ceres, M.~C. Hernando, M.~Mora, I.~M. Pelayo, M.~L.
Puertas, C.~Seara, and D.~R. Wood. \newblock On the metric
dimension of cartesian products of graphs.
\newblock {\em SIAM Journal on Discrete Mathematics}, 21(2):423--441, 2007.

\bibitem {CEJO00}G.~Chartrand, L.~Eroh, M.~A. Johnson, and O.~R. Oellermann.
\newblock Resolvability in graphs and the metric dimension of a graph.
\newblock {\em Discrete Applied Mathematics}, 105(1-3):99--113, 2000.

\bibitem {CZ03}G.~Chartrand and P.~Zhang. \newblock The theory and
applications of resolvability in graphs: A survey.
\newblock {\em Congressus Numerantium}, 160:47--68, 2003.

\bibitem{Chvatal}
V.~Chv{\'a}tal.
\newblock Mastermind.
\newblock {\em Combinatorica}, 3(3):325--329, 1983.


\bibitem {DPL11}J.~D\'{\i}az, O.~Pottonen, M.~J. Serna, and E.~J. van Leeuwen.
\newblock On the complexity of metric dimension. \newblock In L.~Epstein and
P.~Ferragina, editors, \emph{ESA}, volume 7501 of \emph{Lecture
Notes in Computer Science}, pages 419--430. Springer, 2012.

\bibitem {ELW}L.~Epstein, A.~Levin, and G.~J.~Woeginger. \newblock The
(weighted) metric dimension of graphs: Hard and easy cases.
\newblock {\em Algorithmica}, 72(4):1130-1171, 2015.


\bibitem {HM1976}F.~Harary and R.~Melter. \newblock The metric dimension of a
graph. \newblock {\em Ars Combinatoria}, 2:191--195, 1976.

\bibitem {HSV11}M.~Hauptmann, R.~Schmied, and C.~Viehmann.
\newblock Approximation complexity of metric dimension problem.
\newblock {\em Journal of Discrete Algorithms}, 14:214--222, 2012.


\bibitem {KRR1996}S.~Khuller, B.~Raghavachari, and A.~Rosenfeld.
\newblock Landmarks in graphs. \newblock {\em Discrete Applied Mathematics},
70(3):217--229, 1996.

\bibitem {MT}R.~A. Melter and I.~Tomescu. \newblock Metric bases in digital
geometry. \newblock {\em Computer Vision, Graphics, and Image
Processing}, 25:113--121, 1984.

\bibitem {ST04}A.~Seb{\"o} and E.~Tannier. \newblock On metric generators of
graphs. \newblock {\em Mathematics of Operations Research},
29(2):383--393, 2004.

\bibitem {SSH02}B.~Shanmukha, B.~Sooryanarayana, and K.~S. Harinath.
\newblock Metric dimension of wheels.
\newblock {\em Far East Journal of Applied Mathematics}, 8(3):217--229, 2002.

\bibitem {Slat75}P.~J. Slater. \newblock Leaves of trees.
\newblock {\em Congressus Numerantium}, 14:549--559, 1975.



\end{thebibliography}
\end{document}